\numberwithin{equation}{section}
\newtheorem{thm}{Theorem}[section]
\newtheorem{lem}{Lemma}[section]
\newtheorem{prop}{Proposition}[section]
\newtheorem{definition}{Definition}
\newtheorem{example}{Example}
\newtheorem{remark}{Remark}
\newcommand{\Tr}{{\rm {Tr}}}
\begin{document}

\title{A Novel Approach for Bent Functions with Dillon-like Exponents and Characterizing Three Classes of Bent Functions via Kloosterman Sums}
\author{Ziran Tu\thanks{Ziran Tu, Nian Li and Yanan Deng are with the Hubei Key Laboratory of Applied Mathematics, School of Cyber Science and Technology, Hubei
University, Wuhan, 430062, China. Email: tuziran@aliyun.com, nian.li@hubu.edu.cn, 1648481694@qq.com},
Sihem Mesnager\thanks{Sihem Mesnager is with the Department of Mathematics, University of Paris VIII, F-93526 Saint-Denis, Paris, France, the Laboratory Geometry, Analysis and Applications, LAGA, University Sorbonne Paris Nord CNRS,
UMR 7539, F-93430, Villetaneuse, France, and also with the Telecom Paris, 91120 Palaiseau, France. Email: smesnager@univ-paris8.fr},
Xiangyong Zeng\thanks{Xiangyong Zeng is with the Hubei Key Laboratory of Applied Mathematics, Faculty of Mathematics and Statistics, Hubei
University, Wuhan, 430062, China. Email: xzeng@hubu.edu.cn},
Nian Li, \\ Yupeng Jiang\thanks{Yupeng Jiang is with the School of Cyber Science and Technology, Beihang University, Beijing, 100191, China. Email:  jiangyupeng@amss.ac.cn},
Yanan Deng
}
\date{\today}
\maketitle

\begin{abstract}

Dillon-like Boolean functions are known, in the literature, to be those trace polynomial functions from $\mathbb{F}_{2^{2n}}$ to $\mathbb{F}_{2}$, with all the exponents being multiples of $2^n-1$ often called Dillon-like exponents. This paper is devoted to bent functions in which we study the bentness of some classes of Dillon-like Boolean functions connected with rational trace functions. Specifically,  we introduce a special infinite family of trace rational functions. We shall use these functions as building blocks and generalise notably a criterion due to Li et al. published in  [IEEE Trans. Inf. Theory 59(3), pp. 1818-1831, 2013]  on the bentness of Dillon-like functions in the binary case, we explicitly characterize three classes of bent functions. These characterizations are expressed in terms of the well-known binary Kloosterman sums. Furthermore, analysis and experiments indicate that new functions not EA-equivalent to all known classes of monomial functions are included in our classes.
\end{abstract}

\begin{IEEEkeywords}
 Boolean function, bent function, hyper-bent function, Kloosterman sum.
\end{IEEEkeywords}

\section{Introduction} \label{intro}

Bent functions are extraordinary Boolean functions with maximal distance from all affine functions. Rothaus \cite{rothaus} first defined and named these functions, and then Dillon gave a systematical study on them in his famous PhD thesis \cite{dillon1}. Of the interest in beautiful combinatorial objects and close relations with coding and cryptography, bent functions have attracted many people's attention. Up to now, very rich results on bent functions have been proposed, and a detailed survey is referred to as \cite{sihembook}. The research on bent functions has lasted about five decades. It is still active because of our interest in these functions {\color{red}from both} theoretical and practical aspects in several different (but usually related) areas. Different generalisations, including vectorial bent functions, plateaued functions \cite{carlet3,zhengyuliang}, semi-bent functions \cite{charpin3}, hyper-bent  \cite{youssef}, and nega-bent functions \cite{schmidt}, $p$-ary bent functions \cite{kumar}, etc., have been proposed and much progress has been made. Since it still seems elusive to classify bent functions completely, the constructions, characterizations and generations constitute the main relative research activities. An interesting topic is constructing bent functions in polynomial forms \cite{canteaut,charpin1,charpin3,dillon1,dobbertin,yu}. As far as we know, there are only five classes of monomial bent functions (up to Extended-Affine (EA) equivalence), i.e., bent functions $\Tr_{1}^{2n}(ax^d)$ with suitable $a\in\mathbb{F}_{2^{2n}}^{*}$. We call these integers $d$ bent exponents (see Table \ref{tab2}).

\begin{table}[!htb]
\footnotesize
\caption{{The known bent exponents on $\mathbb{F}_{2^{2n}}$}\label{tab2}}
\begin{center}
\begin{tabular}{llcc}\hline  \hline
Case & Exponents $d$ & Algebraic degree  & References\\\hline
Gold & $2^i+1$ & 2 & \cite{gold} \\
Dillon & $l(2^n-1)$  &$n$&  \cite{charpin2,dillon2,lachaud,leander} \\
Leander & $(2^{\frac{n}{2}}+1)^2$  &3&  \cite{charpin1,leander} \\
Kasami & $2^{2i}-2^i+1$  &$i+1$&  \cite{dillon1} \\
Canteaut-Charpin-Kyureghyan & $2^{\frac{2n}{3}}+2^{\frac{n}{3}}+1$   &3& \cite{canteaut} \\
\hline
\end{tabular}
\end{center}
\end{table}
Dillon was the first to characterize the bentness of the function $\Tr_{1}^{2n}(ax^{2^n-1})$, then Charpin and Gong \cite{charpin2} and Leander \cite{leander} generalized the function to $\Tr_{1}^{2n}(ax^{l(2^n-1)})$ with $(l,2^n+1)=1$. Dillon-like functions with multiple trace terms were further investigated in several later works. Using Dickson polynomials and some exponential sums, the authors in \cite{charpin2} gave a characterization on the bentness of Boolean functions of the form $\sum_{i\in E}\Tr_{1}^{2n}(a_{i}x^{i(2^n-1)})$ under some restrictions. Mesnager extended the Charpin-Gong-like functions with an additional trace term over $\mathbb{F}_{4}$, i.e., $f_{a,b}^{r}=\Tr_{1}^{2n}(ax^{r(2^n-1)})+\Tr_{1}^{2}(bx^{\frac{2^{2n}-1}{3}})$ \cite{sihem2,sihem3}. Li et al. in \cite{nianli} further considered functions with the following type
\begin{equation}\label{form1}
f(x)=\sum_{i=0}^{2^n-1}\Tr_{1}^{2n}(a_{i}x^{i(2^n-1)})+\Tr_{1}^{o(e)}(\epsilon x^{\frac{2^{2n}-1}{e}}),
\end{equation}
where $e\,|\,2^n+1$, $a_i\in \mathbb{F}_{2^{2n}}$, $o(e)$ is the smallest positive integer satisfying $o(e)\,|\,2n$ and $e\,|\,2^{o(e)}-1$. They gave a necessary and sufficient condition to characterize the bentness. However, it is desired to give more explicit characterizations. Some restrictions on these coefficients $a_{i}$ and $\epsilon$, or on the integers $i$ and $e$ might be necessary to achieve this. For example, in \cite{nianli}, several new binomial, trinomial and quadrinomial bent functions were found by restricting the number of terms. The authors also considered functions with multiple trace terms satisfying that all coefficients $a_{i}$ were equal. In \cite{yulong}, functions having the form $$\sum_{i=0}^{d-1}\Tr_{1}^{m}(a_{i}x^{(l+i\frac{2^n+1}{d})(2^n-1)})+\Tr_{1}^{o(d)}(\epsilon x^{\frac{2^{2n}-1}{d}})$$ had been considered, where the coefficients $a_{i}\in \mathbb{F}_{2^n}$, $a_{1}=a_{2}=\ldots=a_{d-1}$ and $a_{0}\ne a_{1}$.

We observe that functions defined in \eqref{form1} over $\mathbb{F}_{2^{2n}}$ actually have a more generalized form as
\begin{equation}\label{form2}
f(x)=g(x^{2^n-1}),
\end{equation}
where $g$ is a function from $\mathbb{F}_{2^{2n}}$ to $\mathbb{F}_{2^{2n}}$ satisfying $g(y)\in \mathbb{F}_{2}$ for $y\in\{0\}\cup\{y\in \mathbb{F}_{2^{2n}}^{*}:y^{2^n+1}=1\}$. In \cite{youssef}, Charpin and Gong proposed an important subclass of bent functions called hyper-bent. Then Carlet and Gaborit described the result of Charpin and Gong from the view of Boolean functions \cite{carlet2}. We note that the form \eqref{form2} coincides with a requirement in that description.

In this paper, we introduce a special family of functions
$$\mathcal{H}=\left\{\Tr_{1}^{2n}\left(\frac{a}{x^{2^n-1}+b}\right):\,a,b\in \mathbb{F}_{2^{2n}}^{*}\right\}.$$ These functions in set $\mathcal{H}$ are trace rational. They are also Dillon-like if seen as ordinary trace polynomials. Then by using such kind of functions as building blocks and generalizing the criteria in \cite{nianli} on the binary case, we consider the bentness of functions with the form
\begin{equation}\label{form3}
h(x)=F(f_{1}(x),f_{2}(x),\ldots,f_{t}(x))
\end{equation}
where $F(X_{1},X_{2},\ldots,X_{t})$ is a reduced polynomial in $\mathbb{F}_{2}[X_{1},X_{2},\ldots,X_{t}]$. It is interesting that, by choosing suitable block functions $f_{i}$, $i=1,2,3$ and $F(X_{1},X_{2},X_{3})=X_{1}X_{2}+X_{1}X_{3}+X_{2}X_{3}$, Carlet had considered this method to generate $k$-resilient functions \cite{carlet}. When letting $f_{i}(x)=\Tr_{1}^{2n}(u_{i}x)$, Tang et al. used such functions $h$ to modify certain functions to obtain new bent functions \cite{tangchunming}. Our main contribution here are characterizations of three classes of such bent functions with $F(X_{1},X_{2},\ldots,X_{t})\in \{X_{1}, X_{1}X_{2}, X_{1}X_{2}+X_{1}X_{3}+X_{2}X_{3}\}$. As far as we know, this is the first time bent functions connected with this form of rational trace polynomials have been investigated. To determine the bentness of $h_{1}(x)=F(f_{1})$ with $F(X_{1})=X_{1}$, i.e., $h_{1}=\Tr_{1}^{2n}(\frac{a_{1}}{x^{2^n-1}+b})$ with $a_{1}\in \mathbb{F}_{2^n}^{*}$ and $b\in\mathbb{F}_{2^{2n}}^{*}$, the key is computing exponential sums concerning some rational functions, of which the numerator has degree one and the denominator has degree two. For the other two classes of functions with $F\in \{X_{1}X_{2}, X_{1}X_{2}+X_{1}X_{3}+X_{2}X_{3}\}$, the technique is decomposing certain exponential sums into a combination of the sums connected with $h_{1}$. We finally give some discussion on the EA equivalence, and find that in our classes there are new bent functions which are not EA-equivalent to all known classes of monomial bent functions.

The rest of this article is organized as follows. Section \ref{sec2} gives some notations and preliminaries, including the criterion on bentness of functions in \eqref{form2} and some useful exponential sums. In Section \ref{mainresults}, we describe our method of constructing bent functions and characterising three classes of bent functions. Section \ref{sec4} discusses the EA equivalence, and Section \ref{sec4}  concludes the paper.

\section{Preliminaries}\label{sec2}
\subsection{Notations and background}
An $m$-variable Boolean function $f$ is a mapping from the vector space $\mathbb{F}_{2}^m$ to the binary field $\mathbb{F}_{2}$. Since the vector space $\mathbb{F}_{2}^m$ is usually identified with the finite field $\mathbb{F}_{2^m}$, we also treat $f$ as a mapping from $\mathbb{F}_{2^m}$ to $\mathbb{F}_{2}$. The set of all $m$-variable Boolean functions is denoted by $\mathcal{B}_{m}$. The \textit{support} of $f$ is the set $supp(f)=\{x\in \mathbb{F}_{2}^m|f(x)=1\}$. The cardinality of $supp(f)$ is called the \textit{Hamming weight} of $f$, which is denoted by $wt(f)$.
There are two general ways to represent a Boolean function. The first one is called the \textit{Algebraic Normal Form} (ANF), i.e., we represent $f$ as an $m$-variable polynomial
$$f(x_{1},x_{2},\ldots,x_{m})=\bigoplus_{I\in \mathcal{P(M)}}a_{I}\prod_{i\in I} x_{i},$$
where $a_{I}\in \mathbb{F}_{2}$ and $\mathcal{P(M)}$ denotes the power sets of $\mathcal{M}=\{1,2,\ldots,m\}$. Such representation belongs to the quotient ring $\mathbb{F}_{2}[x_{1},x_{2},\ldots,x_{m}]/(x_{1}^2\oplus x_{1},x_{2}^2\oplus x_{2},\ldots,x_{m}^2\oplus x_{m})$ and is unique. A polynomial $F(X_{1},X_{2},\ldots,X_{t})$ in $\mathbb{F}_{2}[X_{1},X_{2},\ldots,X_{t}]$ is called \textit{reduced} if it is the ANF of some Boolean function $f$. The \textit{algebraic degree} of $f$ is defined as $deg(f)=max\{\,|I|:a_{I}\ne 0\}$.
For two positive integers $m$ and $n$ with $n\,|\,m$, we use ${\rm
Tr}_{n}^{m}(\cdot)$ to denote the {\it trace function} from
$\mathbb{F}_{2^m}$ to $\mathbb{F}_{2^n}$ \cite{lidl}, i.e.,
$${\rm Tr}_{n}^{m}(x)=x+x^{2^n}+x^{2^{2n}}+\cdot\cdot\cdot+x^{2^{(m/n-1)n}}.$$
Then each non-zero Boolean function $f$ from $\mathbb{F}_{2^m}$ to $\mathbb{F}_{2}$ has a unique polynomial representation \cite{sihembook}, i.e.,
$$f(x)=\sum_{j\in \Gamma_{m}}\Tr_{1}^{\mathfrak{o}(j)}(a_{j}x^j)+\epsilon(1+ x^{2^m-1}),~a_{j}\in \mathbb{F}_{2^{\mathfrak{o}(j)}},~$$
where
\noindent\begin{itemize}
\item[-] $\Gamma_{m}$ denotes the set of integers by choosing exactly one element in each cyclotomic coset modulo $2^m-1$, including the trivial coset which only contains $0$;
\item[-] $\mathfrak{o}(j)$ is the size of the cost containing $j$;
\item[-] $\epsilon = wt(f) ~{\rm mod }~2$.
\end{itemize}
The Walsh transform of $f$ at $\omega\in \mathbb{F}_{2^m}$ is defined as
$$W_{f}(\omega)=\sum_{x\in \mathbb{F}_{2^m}}(-1)^{f(x)+\Tr_{1}^{m}(\omega x)}.$$
The following is the definition of bent functions.
\begin{definition}
A Boolean function $f\in \mathcal{B}_{m}$ is called bent if $W_{f}(\omega)^2=2^m$ for all $\omega\in \mathbb{F}_{2^m}$.
\end{definition}
 Bent functions exist only for even $m$. In the following, we always assume $m=2n$ with $n$ being a positive integer. A notable fact is that bent functions always occur in pairs. If a Boolean function $f$ is bent, another bent function $\widetilde{f}$ is defined by considering the sign of the values $\{W_{f}(\alpha):\alpha\in \mathbb{F}_{2^m}\}$, i.e., $(-1)^{\widetilde{f}(x)}=2^{-n}W_{f}(x)$. We call $\widetilde{f}$ the \textit{dual} of $f$. In many cases, given a bent function $f$, the computation of its dual is not an easy task.
An important subset of $\mathbb{F}_{2^{m}}$ is the unit circle. For each element $x$ in the finite field $\mathbb{F}_{2^{m}}$, denote
$\overline{x}=x^{2^n}$.
The {\it unit circle} of $\mathbb{F}_{2^{m}}$ is defined as
\begin{equation}\label{defU}
U=\left\{\,\eta\in \mathbb{F}_{2^{m}}: \eta^{2^n+1}=\eta\overline{\eta}=1\right\}.
\end{equation}
The so-called polar decomposition, i.e., each $x\in \mathbb{F}_{2^{m}}^{*}$ can be uniquely written as $x=\lambda y$ with $\lambda\in U$ and $y\in \mathbb{F}_{2^n}^{*}$, is very useful. There is a correspondence between $U$ and the subfield $\mathbb{F}_{2^n}$, which can be used to transform the exponential sum on $U$ to $\mathbb{F}_{2^n}$.
\begin{lem}\label{lem0} {\rm\cite{Rosendahl}} Let $m=2n$ be a positive integer, $A\in \mathbb{F}_{2^{m}}\setminus\mathbb{F}_{2^n}$ be fixed and $U$ be given in Eq.\eqref{defU}. Then
$$U\setminus \{1\}=\left\{\frac{u+{A}}{u+\overline{A}}: u\in \mathbb{F}_{2^n}\right\}.$$
\end{lem}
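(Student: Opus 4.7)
The plan is to show the set equality by displaying an explicit bijection between $\mathbb{F}_{2^n}$ and $U\setminus\{1\}$ given by the stated rule $u\mapsto \eta(u):=\frac{u+A}{u+\overline{A}}$. The argument splits into three small steps: well-definedness, containment in $U\setminus\{1\}$, and surjectivity (with injectivity/inverse).

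First I would check that the map is well-defined, i.e.\ that the denominator never vanishes. Because conjugation $x\mapsto \overline{x}=x^{2^n}$ is an involution on $\mathbb{F}_{2^{2n}}$ whose fixed field is $\mathbb{F}_{2^n}$, the hypothesis $A\notin\mathbb{F}_{2^n}$ forces $\overline{A}\notin\mathbb{F}_{2^n}$, so $u+\overline{A}\neq 0$ for every $u\in\mathbb{F}_{2^n}$. Next I would verify $\eta(u)\in U\setminus\{1\}$. Since $\overline{u}=u$ and $\overline{\overline{A}}=A$, we get $\overline{\eta(u)}=\frac{u+\overline{A}}{u+A}$, hence $\eta(u)\,\overline{\eta(u)}=1$, i.e.\ $\eta(u)\in U$. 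Moreover $\eta(u)=1$ would require $A=\overline{A}$, contradicting $A\notin\mathbb{F}_{2^n}$, so $\eta(u)\neq 1$.

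For surjectivity I would just invert the relation $\eta=\frac{u+A}{u+\overline{A}}$ for an arbitrary $\eta\in U\setminus\{1\}$: solving gives
\[
u \;=\; \frac{A+\eta\,\overline{A}}{\eta+1},
\]
which is meaningful because $\eta\neq 1$. It remains to verify that this $u$ lies in $\mathbb{F}_{2^n}$, and this is the one small computation that carries the real content: using $\overline{\eta}=\eta^{-1}$ (which holds because $\eta\in U$), one finds
\[
\overline{u}\;=\;\frac{\overline{A}+\eta^{-1}A}{\eta^{-1}+1}\;=\;\frac{\eta\,\overline{A}+A}{1+\eta}\;=\;u,
\]
so indeed $u\in\mathbb{F}_{2^n}$. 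This shows surjectivity (and simultaneously injectivity, by exhibiting a two-sided inverse). A cardinality check is an optional sanity guard: $|U|=2^n+1$, so $|U\setminus\{1\}|=2^n=|\mathbb{F}_{2^n}|$, consistent with our bijection.

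There is no real obstacle; the only place one must be careful is in the surjectivity step, where the identity $\overline{\eta}=\eta^{-1}$ for $\eta\in U$ is exactly what makes the candidate pre-image $u$ land in the base field. Once that symmetry is exploited, the two set inclusions $\supseteq$ and $\subseteq$ are both immediate.
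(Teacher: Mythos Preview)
Your proof is correct. Note, however, that the paper does not give its own proof of this lemma: it is stated as a citation to \cite{Rosendahl} and used as a black box. Your argument is the standard direct one---well-definedness, the containment $\eta(u)\in U\setminus\{1\}$ via $\overline{\eta(u)}=\eta(u)^{-1}$, and surjectivity by explicitly inverting the M\"obius-type map and checking the preimage lands in $\mathbb{F}_{2^n}$ using $\overline{\eta}=\eta^{-1}$. This is exactly the natural approach and there is nothing to compare it against in the present paper.
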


The next lemma deals with the number of solutions in $U$ of a specific quadratic equation over a finite field of even characteristic \cite{alah,dodunekov,ztu}.
\begin{lem}\label{lem2.1}
Let $m=2n$ be an even positive integer and $a,b\in \mathbb{F}_{2^m}^{*}$ satisfying ${\rm Tr}_{1}^{m}\left(\frac{b}{a^2}\right)=0$. Then the
quadratic equation $x^2+ax+b=0$ has

{\rm (1)} both solutions in the unit circle if and only if
$$\begin{array}{c}
b=\frac{a}{\bar{a}} \text{ and }
{\rm Tr}_{1}^{n}\left(\frac{b}{a^2}\right)={\rm Tr}_{1}^{n}\left(\frac{1}{a\bar{a}}\right)=1;
\end{array}
$$

{\rm (2)} exactly one solution in the unit circle, if and only if
$$
\begin{array}{c}
b\ne \frac{a}{\bar{a}} \text{ and }
(1+b\bar{b})(1+a\bar{a}+b\bar{b})+a^2\bar{b}+\bar{a}^2b=0.
\end{array}
$$
\end{lem}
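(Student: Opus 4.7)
My plan is to encode membership in $U$ by the identity $\bar{\eta}=1/\eta$ and translate both parts into algebraic identities among $a,b,\bar{a},\bar{b}$. Throughout, let $x_1,x_2$ denote the two distinct roots of $x^2+ax+b=0$ in $\mathbb{F}_{2^m}$ (existing because $\Tr_1^m(b/a^2)=0$ and $a,b\neq 0$), so Vieta gives $x_1+x_2=a$ and $x_1x_2=b$.

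For part (1), I would first observe that if both $x_1,x_2\in U$, then $\{1/x_1,1/x_2\}=\{\bar{x}_1,\bar{x}_2\}$ is the root set of the conjugate equation $x^2+\bar{a}x+\bar{b}=0$; matching Vieta relations gives $\bar{a}=a/b$ and $\bar{b}=1/b$, so $b=a/\bar{a}$. Conversely, under $b=a/\bar{a}$ the substitution $x=ay$ reduces the equation to $y^2+y=1/(a\bar{a})$. Since $1/(a\bar{a})\in\mathbb{F}_{2^n}$, if a root $y_0$ lay in $\mathbb{F}_{2^n}$ it would satisfy $y_0^2=y_0^2+y_0$, forcing $y_0=0$ and contradicting $b\neq 0$; hence the roots lie outside $\mathbb{F}_{2^n}$, which is equivalent to $\Tr_1^n(1/(a\bar{a}))=1$. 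In that case Galois conjugation swaps the two roots, so $\bar{y}_0=y_0+1$ and $y_0\bar{y}_0=y_0^2+y_0=1/(a\bar{a})$, giving $x_i\bar{x}_i=a\bar{a}\cdot y_0\bar{y}_0=1$ for $i=1,2$.

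For part (2), assuming $b\neq a/\bar{a}$ and that some root $x_1$ lies in $U$, I would conjugate the defining equation and multiply by $x_1^2$ to obtain $\bar{b}x_1^2+\bar{a}x_1+1=0$; combining this with $\bar{b}(x_1^2+ax_1+b)=0$ eliminates the quadratic term and yields $(a\bar{b}+\bar{a})x_1=1+b\bar{b}$. Since $b\neq a/\bar{a}$ is equivalent to $a\bar{b}+\bar{a}\neq 0$, this forces $x_1=(1+b\bar{b})/(a\bar{b}+\bar{a})$. Substituting back into $x^2+ax+b=0$ and clearing $(a\bar{b}+\bar{a})^2$ produces, after the characteristic-$2$ cancellations, the identity $(1+b\bar{b})(1+a\bar{a}+b\bar{b})+a^2\bar{b}+\bar{a}^2 b=0$. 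The converse runs the same computation in reverse: under this identity the formula defines a root of the quadratic, while the companion expansion $(a\bar{b}+\bar{a})(a+\bar{a}b)=a\bar{a}(1+b\bar{b})+a^2\bar{b}+\bar{a}^2 b$ shows that the same identity is equivalent to $x_1\bar{x}_1=1$. Uniqueness of the $U$-root is then automatic, since by part (1) both roots lying in $U$ would force $b=a/\bar{a}$.

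The main obstacle I foresee is the characteristic-$2$ bookkeeping at the end of part (2): one must verify that the numerator of $x_1^2+ax_1+b$ and the numerator of $1+x_1\bar{x}_1$ both collapse to the same polynomial $(1+b\bar{b})(1+a\bar{a}+b\bar{b})+a^2\bar{b}+\bar{a}^2 b$, so that a single algebraic condition simultaneously witnesses both that $x_1$ is a root and that it lies on $U$.
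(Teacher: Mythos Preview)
The paper does not prove this lemma; it is quoted from the cited references on Zetterberg codes and permutation trinomials. So there is no in-paper argument to compare against, and your proposal must be judged on its own.

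Your strategy is the natural one and part (2) is clean. The elimination giving $x_1=(1+b\bar b)/(a\bar b+\bar a)$ is exactly right, and your key observation---that substituting this value into $x^2+ax+b$ and into $1-x\bar x$ both produce the numerator $(1+b\bar b)(1+a\bar a+b\bar b)+a^2\bar b+\bar a^2 b$, via the factorisation $(a\bar b+\bar a)(a+\bar a b)=a\bar a(1+b\bar b)+a^2\bar b+\bar a^2 b$---is precisely what makes the biconditional work. One edge case you should state: if $1+b\bar b=0$ the formula gives $x_1=0$, but then the identity collapses to $a^2\bar b+\bar a^2 b=0$, i.e.\ $b=a/\bar a$, which is excluded, so no harm is done.

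In part (1) there is a genuine expository gap. The sentence ``if a root $y_0$ lay in $\mathbb{F}_{2^n}$ it would satisfy $y_0^2=y_0^2+y_0$'' does not follow from $y_0^2+y_0=1/(a\bar a)$ alone. What you are silently using is the hypothesis $x_0\in U$, which gives $y_0\bar y_0=1/(a\bar a)=y_0^2+y_0$; combined with $y_0\in\mathbb{F}_{2^n}$ (so $y_0\bar y_0=y_0^2$) this yields the contradiction. In other words, this step belongs to the forward implication, not the converse, and the word ``Conversely'' at the head of that paragraph is misplaced: you are still finishing $(\Rightarrow)$ by deriving $\Tr_1^n(1/(a\bar a))=1$, and only the final sentence (Galois swaps the roots, hence $y_0\bar y_0=y_0^2+y_0=1/(a\bar a)$ and $x_i\bar x_i=1$) is the true $(\Leftarrow)$. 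Separate the two directions cleanly and the argument is complete.
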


\subsection{On Dillon-like functions of the form  $f=g(x^{2^n-1})$}
One of the most important classes of bent functions is the \textit{Partial Spread Class}($\mathcal{PS}$) \cite{dillon1}, which are defined relative to classic geometric objects called partial spreads.
\begin{definition}
Let $m=2n$ be an even integer. A partial spread of $\,\mathbb{F}_{2^m}$ is a set
of pairwise supplementary $n$-dimensional subspaces of $\mathbb{F}_{2^m}$. A partial spread is a
spread if the union of its elements equals $\mathbb{F}_{2^m}$.
\end{definition}

Boolean functions $f(x)=g(x^{2^n-1})$ constitute a special subclass of the famous partial spread functions, and the corresponding spread is the Desarguesian spread \cite[page 346]{sihembook}. A Boolean function $f$ over $\mathbb{F}_{2^m}$ is said to be in the class $\mathcal{PS}^{-}$, if $f(0)=0$ and the support of $f$ together with $0$ is the union of $2^{n-1}$ elements of a partial spread of $\mathbb{F}_{2^m}$. Dillon has proved that $deg(f)=\frac{m}{2}$ if $f$ is in $\mathcal{PS}^{-}$ \cite{dillon1}.
 In \cite{youssef}, Youssef and Gong studied the Boolean functions $f$ on the field $\mathbb{F}_{2^m}$, of which the Hamming distance to all functions $\Tr_{1}^{m}(ax^i)+\epsilon$ ($a\in \mathbb{F}_{2^m}$, $\epsilon \in \mathbb{F}_{2}$ and ${\rm gcd}(i,2^m-1)$=1) equals $2^{m-1}\pm 2^{n-1}$. They called these functions \textit{hyper-bent} and showed the existence of such functions in terms of sequences. In \cite{carlet2}, Carlet and Gaborit translated the result into the terminology of Boolean functions.
\begin{prop}{\rm\cite{carlet2,youssef}}\label{propyoussef}
Let $n$ be a positive integer, $q=2^n$ and $\omega$ be a primitive element of $\mathbb{F}_{q^2}$. Let $f$ be a Boolean function on $\mathbb{F}_{q^2}$ such that $f(0)=0$ and $f(\omega^{q+1}x)=f(x)$ for every $x\in \mathbb{F}_{q^2}$. Then $f$ is hyper-bent if and only if the weight of the vector $(f(1),f(\omega),f(\omega^2),\ldots,f(\omega^{q}))$ equals $\frac{q}{2}$.
\end{prop}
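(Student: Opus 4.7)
The plan is to exploit the multiplicative invariance $f(\omega^{q+1}x)=f(x)$ to reduce the Walsh transform to a sum over the $q+1$ cosets of $\mathbb{F}_q^*$ in $\mathbb{F}_{q^2}^*$, indexed by the representatives $1,\omega,\omega^2,\ldots,\omega^q$. First I would observe that since $\omega^{q+1}$ generates $\mathbb{F}_q^*$, the hypothesis says $f$ is constant on each coset $\omega^i\mathbb{F}_q^*$, and that $f(0)=0$ by assumption. Splitting the sum defining $W_f(a)$ along these orbits gives
\begin{equation*}
W_f(a) \;=\; 1 + \sum_{i=0}^{q}(-1)^{f(\omega^i)}\sum_{y\in\mathbb{F}_q^*}(-1)^{\Tr_1^{2n}(a\omega^i y)}.
\end{equation*}
For $y\in\mathbb{F}_q$ one has $\Tr_1^{2n}(cy)=\Tr_1^{n}(\Tr_n^{2n}(c)y)$, so the inner character sum equals $q-1$ when $\Tr_n^{2n}(a\omega^i)=0$ and $-1$ otherwise. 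The condition $\Tr_n^{2n}(a\omega^i)=0$ is equivalent to $a\omega^i\in\mathbb{F}_q^*$, which for fixed $a\in\mathbb{F}_{q^2}^*$ picks out exactly one index $i_0\in\{0,1,\ldots,q\}$.

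Denoting by $W_v$ the Hamming weight of the vector $(f(1),f(\omega),\ldots,f(\omega^q))$, a short rearrangement gives
\begin{equation*}
W_f(a)\;=\;-q+q(-1)^{f(\omega^{i_0})}+2W_v\qquad(a\in\mathbb{F}_{q^2}^*),
\end{equation*}
while $W_f(0)=q^2-2(q-1)W_v$. If $W_v=q/2$ then both formulas yield $W_f(a)\in\{\pm q\}$, so $f$ is bent. To upgrade to hyper-bent, I would apply this directly to $f\circ\pi_j$, where $\pi_j(x)=x^j$ and $\gcd(j,q^2-1)=1$: the invariance is preserved because $(\omega^{q+1})^j\in\mathbb{F}_q^*$, and the map $i\mapsto ij\bmod(q+1)$ is a bijection on the orbit indices (since $\gcd(j,q+1)=1$), so the coordinate vector $(f(\omega^{ij}))_i$ is a permutation of $(f(\omega^i))_i$ and still has weight $q/2$. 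Hence every such $f\circ\pi_j$ is bent, which is precisely the definition of hyper-bentness for $f$.

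Conversely, assuming hyper-bentness (and hence bentness), the identity $W_f(0)=q^2-2(q-1)W_v=\pm q$ forces $W_v\in\{q(q-1),q(q+1)\}/(2(q-1))$; the latter value fails to be an integer except in the degenerate case $q=2$, so $W_v=q/2$ as required. The main obstacle to flag is the bookkeeping in the third step: verifying the uniqueness of $i_0$ and combining the two cases ($\Tr_n^{2n}(a\omega^i)=0$ vs.\ nonzero) cleanly so that the $\pm q$ structure emerges without sign errors, and making sure the hyper-bent direction really follows from applying the bent case to each monomial twist $x\mapsto x^j$ with $\gcd(j,q^2-1)=1$.
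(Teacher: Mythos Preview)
Your argument is correct. The paper does not supply its own proof of this proposition (it is quoted from \cite{carlet2,youssef}), but its subsequent Propositions~\ref{newprop}--\ref{prop1} carry out an equivalent computation via the polar decomposition $x=\lambda y$ with $\lambda\in U$, $y\in\mathbb{F}_q^*$: your coset representatives $1,\omega,\ldots,\omega^q$ play exactly the role of the unit circle $U$, and your formula $W_f(a)=q(-1)^{f(\omega^{i_0})}-q+2W_v$ is the paper's Proposition~\ref{npp} after the translation $\sum_{\lambda\in U}(-1)^{g(\lambda)}=q+1-2W_v$. Your handling of the hyper-bent direction (checking that the $\mathbb{F}_q^*$-invariance and the weight $W_v$ are preserved under each twist $x\mapsto x^j$ with $\gcd(j,q^2-1)=1$) is more explicit than the paper's one-line observation after Proposition~\ref{prop1} that bent and hyper-bent coincide for this class. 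You are also right to flag $q=2$ as degenerate in the converse; the paper's own Proposition~\ref{prop1} imposes $q>2$ for the same reason.
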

In Proposition \ref{propyoussef}, the condition $f(\omega^{q+1}x)=f(x)$ for every $x\in \mathbb{F}_{q^2}$ actually implies that $f$ has the form  \eqref{form2}.
\begin{prop}\label{newprop}
Let $n$ be a positive integer, $q=2^n$ and $\omega$ be a primitive element of $\mathbb{F}_{q^2}$. Let $f$ be a function on $\mathbb{F}_{q^2}$, then $f(x)$ satisfies $f(\omega^{q+1}x)=f(x)$ for every $x\in \mathbb{F}_{q^2}$, if and only if there exists a function $g(x)$ on $\mathbb{F}_{q^2}$ such that
$f(x)=g(x^{q-1})$.
\end{prop}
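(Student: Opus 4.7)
The plan hinges on recognising that the subgroup $\langle \omega^{q+1}\rangle$ inside $\mathbb{F}_{q^2}^{*}$ coincides with $\mathbb{F}_{q}^{*}$. Since $\omega$ has order $q^2-1$, the element $\omega^{q+1}$ has order $(q^2-1)/\gcd(q+1,q^2-1)=(q^2-1)/(q+1)=q-1$; as $\mathbb{F}_{q^2}^{*}$ is cyclic it contains a unique subgroup of each order dividing $q^2-1$, and $\mathbb{F}_{q}^{*}$ is a subgroup of order $q-1$, so $\langle \omega^{q+1}\rangle=\mathbb{F}_{q}^{*}$. Iterating the condition $f(\omega^{q+1}x)=f(x)$ therefore promotes it to the full $\mathbb{F}_{q}^{*}$-invariance $f(cx)=f(x)$ for every $c\in \mathbb{F}_{q}^{*}$ and every $x\in \mathbb{F}_{q^2}$.

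The ``if'' direction is then a one-line computation: assuming $f(x)=g(x^{q-1})$,
$$f(\omega^{q+1}x)\;=\;g\!\left((\omega^{q+1}x)^{q-1}\right)\;=\;g\!\left(\omega^{q^2-1}\,x^{q-1}\right)\;=\;g(x^{q-1})\;=\;f(x),$$
since $\omega^{q^2-1}=1$. For the ``only if'' direction I would construct $g$ explicitly. The map $\varphi:x\mapsto x^{q-1}$ sends $0$ to $0$ and sends $\mathbb{F}_{q^2}^{*}$ into $U=\{\eta\in\mathbb{F}_{q^2}^{*}:\eta^{q+1}=1\}$, because $(x^{q-1})^{q+1}=x^{q^2-1}=1$; moreover, for nonzero $x_1,x_2$, $\varphi(x_1)=\varphi(x_2)$ iff $x_1/x_2\in\mathbb{F}_{q}^{*}$, so the fibres of $\varphi$ on $\mathbb{F}_{q^2}^{*}$ are exactly the $\mathbb{F}_{q}^{*}$-cosets, and a counting argument ($|\mathbb{F}_{q^2}^{*}/\mathbb{F}_{q}^{*}|=q+1=|U|$) gives that $\varphi$ maps $\mathbb{F}_{q^2}^{*}$ onto $U$. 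Pick any section $y\mapsto x_y$ of $\varphi$ over $U$ and define
$$g(y)\;=\;\begin{cases}f(0),&y=0,\\ f(x_y),&y\in U,\\ \text{anything,}&y\in\mathbb{F}_{q^2}\setminus(\{0\}\cup U).\end{cases}$$
The $\mathbb{F}_{q}^{*}$-invariance of $f$ obtained in the first paragraph is exactly what forces the middle line to be independent of the choice of preimage $x_y$, and by construction $g(x^{q-1})=f(x)$ for every $x\in\mathbb{F}_{q^2}$.

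There is no truly delicate step here; the argument is essentially a coset/well-definedness verification. The one point that deserves care, and on which the whole equivalence pivots, is the identification $\langle \omega^{q+1}\rangle=\mathbb{F}_{q}^{*}$: without it, the hypothesis would only yield invariance under the cyclic group generated by $\omega^{q+1}$, not under all of $\mathbb{F}_{q}^{*}$, and the fibres of $x\mapsto x^{q-1}$ would be strictly larger than the orbits on which $f$ is known to be constant.
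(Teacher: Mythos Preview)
Your argument is correct, but it follows a different route from the paper. The paper writes $f$ in its unique polynomial representation $f(x)=\sum_{i=0}^{q^2-1}a_i x^i$, equates $f(\omega^{q+1}x)$ with $f(x)$ coefficientwise, and concludes that $a_i(\omega^{(q+1)i}+1)=0$, forcing $a_i=0$ whenever $(q-1)\nmid i$; the converse is declared obvious. Your approach instead identifies $\langle\omega^{q+1}\rangle=\mathbb{F}_q^{*}$, deduces full $\mathbb{F}_q^{*}$-invariance of $f$, and then builds $g$ set-theoretically via a section of $x\mapsto x^{q-1}$, using that the fibres of this map on $\mathbb{F}_{q^2}^{*}$ are exactly the $\mathbb{F}_q^{*}$-cosets. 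The paper's method has the advantage of immediately displaying the polynomial shape of $f$ (only Dillon exponents survive), which is what the surrounding discussion cares about; your method is more conceptual and would generalise verbatim to maps into any set, since it never uses that $f$ is field-valued or has a polynomial form.
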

\begin{proof}
Every function $f$ on $\mathbb{F}_{q^2}$ can be written uniquely as a polynomial
$$f(x)=\sum_{i=0}^{q^2-1}a_{i}x^{i},$$
with $a_{i}\in \mathbb{F}_{q^2}$. Since $f(\omega^{q+1}x)=f(x)$ for every $x\in \mathbb{F}_{q^2}$, i.e., $\sum_{i=0}^{q^2-1}a_{i}\omega^{(q+1)i} x^{i}=\sum_{i=0}^{q^2-1}a_{i}x^{i}$, by the uniqueness of polynomial representation, we have
$$a_{i}(\omega^{(q+1)i}+1)=0.$$
Then $a_{i}=0$ for every $i$ with $(q-1)\nmid i$. Since the sufficiency is obvious, this proves the proposition.
\end{proof}

An important property of Dillon-like functions, i.e., functions having form \eqref{form2}, is that the Walsh transform can be reduced to a special sum on the unit circle.
\begin{prop}\label{npp}
Let $q=2^n>2$ and $g(x)$ be a function on $\mathbb{F}_{q^2}$ such that $f(x)=g(x^{q-1})$ is a Boolean function. Then
$$W_{f}(\alpha)=\left\{\begin{array}{ll}
(-1)^{g(0)}-\sum_{\lambda\in U}(-1)^{g\left({\lambda}\right)}+(-1)^{g\left(\alpha^{1-q}\right)}\cdot q,& \,{\rm if}\, \alpha\ne 0, \\
(-1)^{g(0)}+(q-1)\cdot\sum_{\lambda \in U}(-1)^{g\left(\overline{\lambda}^2\right)},& \,{\rm if}\, \alpha=0.
\end{array}\right.$$
\end{prop}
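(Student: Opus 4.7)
The plan is to isolate the contribution from $x=0$, which is $(-1)^{g(0)}$, and then apply the polar decomposition to the remaining sum over $\mathbb{F}_{q^2}^{*}$. Writing each nonzero $x$ uniquely as $x=\lambda y$ with $\lambda\in U$ and $y\in\mathbb{F}_{2^n}^{*}$, and using $y^{q-1}=1$ together with $\lambda^{q+1}=1$, I obtain $x^{q-1}=\lambda^{q-1}=\overline{\lambda}^{\,2}$ (the last step because $\overline{\lambda}=\lambda^{-1}$ and the characteristic is two). Hence the argument of $g$ depends only on $\lambda$ and the double sum over $(\lambda,y)$ factors cleanly.

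Next, transitivity of the trace yields $\Tr_{1}^{2n}(\alpha\lambda y)=\Tr_{1}^{n}\bigl(y(\alpha\lambda+\overline{\alpha}\,\overline{\lambda})\bigr)$ for $y\in\mathbb{F}_{2^n}$, so the inner sum over $y\in\mathbb{F}_{2^n}^{*}$ equals $q-1$ when $\alpha\lambda+\overline{\alpha}\,\overline{\lambda}=0$ and equals $-1$ otherwise. This produces
$$W_{f}(\alpha)=(-1)^{g(0)}-\sum_{\lambda\in U}(-1)^{g(\overline{\lambda}^{\,2})}+q\sum_{\substack{\lambda\in U\\ \alpha\lambda+\overline{\alpha}\,\overline{\lambda}=0}}(-1)^{g(\overline{\lambda}^{\,2})}.$$
When $\alpha=0$ the linear constraint is vacuous, every $\lambda\in U$ contributes to the final sum, and combining with the middle term recovers $W_{f}(0)=(-1)^{g(0)}+(q-1)\sum_{\lambda\in U}(-1)^{g(\overline{\lambda}^{\,2})}$, which matches the second case.

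For $\alpha\ne 0$, using $\overline{\lambda}=\lambda^{-1}$ the constraint $\alpha\lambda+\overline{\alpha}\,\overline{\lambda}=0$ rewrites as $\lambda^{2}=\overline{\alpha}/\alpha=\alpha^{q-1}$. Since squaring is a bijection of $\mathbb{F}_{q^2}$ in characteristic two, this admits a unique root $\lambda_{0}$, and the identity $\lambda_{0}^{2(q+1)}=\alpha^{q^{2}-1}=1$ together with $q+1$ being odd forces $\lambda_{0}\in U$. Thus the constrained sum reduces to the single term $(-1)^{g(\overline{\lambda_{0}}^{\,2})}=(-1)^{g(\lambda_{0}^{-2})}=(-1)^{g(\alpha^{1-q})}$. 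Finally, since $|U|=q+1$ is odd, the map $\lambda\mapsto\overline{\lambda}^{\,2}=\lambda^{-2}$ permutes $U$, so in the first case $\sum_{\lambda\in U}(-1)^{g(\overline{\lambda}^{\,2})}$ may be rewritten as $\sum_{\lambda\in U}(-1)^{g(\lambda)}$ to match the stated formula. The main technical point is pinning down and certifying the unique $\lambda_{0}\in U$ solving the linear constraint; once that is in hand, everything else is a clean bookkeeping computation.
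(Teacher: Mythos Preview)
Your proof is correct and follows essentially the same route as the paper: separate off $x=0$, apply the polar decomposition $x=\lambda y$ with $\lambda\in U$ and $y\in\mathbb{F}_{2^n}^{*}$, reduce the inner character sum over $y$ via the trace identity, and then identify the unique $\lambda\in U$ satisfying $\alpha\lambda+\overline{\alpha}\,\overline{\lambda}=0$. If anything, you are slightly more explicit than the paper in verifying that the solution $\lambda_{0}$ of $\lambda^{2}=\alpha^{q-1}$ actually lies in $U$ and in justifying the reindexing $\sum_{\lambda\in U}(-1)^{g(\overline{\lambda}^{\,2})}=\sum_{\lambda\in U}(-1)^{g(\lambda)}$.
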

\begin{proof}
The proof is direct. For any $\alpha\in \mathbb{F}_{q^2}^{*}$, from the definition of Walsh transform, we have
\begin{eqnarray}
W_{f}(\alpha)&=&\sum_{x\in \mathbb{F}_{q^2}}(-1)^{g(x^{q-1})+\Tr_{1}^{2n}(\alpha x)}\nonumber\\
&=&(-1)^{g(0)}+\sum_{x\in \mathbb{F}_{q^2}^{*}}(-1)^{g(x^{q-1})+{\rm Tr}_{1}^{2n}(\alpha x)}\nonumber\\
&=&(-1)^{g(0)}+\sum_{\lambda\in U}\sum_{y\in \mathbb{F}_{q}^{*}}(-1)^{g\left(\overline{\lambda}^2\right)+{\rm Tr}_{1}^{2n}\left(\alpha \lambda y\right)}\nonumber\\
&=&(-1)^{g(0)}+\sum_{\lambda\in U}(-1)^{g\left(\overline{\lambda}^2\right)}\sum_{y\in \mathbb{F}_{q}^{*}}(-1)^{{\rm Tr}_{1}^{n}((\alpha \lambda+\overline{\alpha} \overline{\lambda}) y)} \nonumber\\
&=&(-1)^{g(0)}-\sum_{\lambda\in U}(-1)^{g\left(\overline{\lambda}^2\right)}+\sum_{\lambda\in U}(-1)^{g\left(\overline{\lambda}^2\right)}\sum_{y\in \mathbb{F}_{q}}(-1)^{{\rm Tr}_{1}^{n}((\alpha \lambda+\overline{\alpha} \overline{\lambda}) y)} \nonumber\\
&=&(-1)^{g(0)}-\sum_{\lambda\in U}(-1)^{g\left(\overline{\lambda}^2\right)}+(-1)^{g\left(\alpha^{1-q}\right)}\cdot q. \nonumber
\end{eqnarray}
where the last equality is obtained because $\alpha\lambda+\overline{\alpha}\overline{\lambda}=0$ means $\overline{\lambda}^2=\alpha^{1-q}$.
Combining the facts
\begin{eqnarray*}
W_{f}(0)&=&(-1)^{g(0)}+\sum_{x\in \mathbb{F}_{q^2}^{*}}(-1)^{g(x^{q-1})}\\
&=&(-1)^{g(0)}+(q-1)\cdot\sum_{\lambda \in U}(-1)^{g\left(\overline{\lambda}^2\right)}
\end{eqnarray*}
and
$$\sum_{\lambda \in U}(-1)^{g\left(\overline{\lambda}^2\right)}=\sum_{\lambda \in U}(-1)^{g({\lambda})},$$
we complete the proof.
\end{proof}
From Proposition \ref{npp}, we easily get the following proposition to determine the bentness of Dillon-like functions, which is a slightly different generalisation of the criterion in \cite{nianli} on the binary case.
\begin{prop}\label{prop1}
Let $q=2^n>2$ and $g(x)$ be a function on $\mathbb{F}_{q^2}$ such that $f(x)=g(x^{q-1})$ is a Boolean function. Then $f$ is bent if and only if
\begin{equation}\label{eq2.2}
\sum_{\lambda\in U}(-1)^{g(\lambda)}=(-1)^{f(0)}.
\end{equation}
\end{prop}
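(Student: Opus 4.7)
The plan is to derive Proposition \ref{prop1} as an essentially algebraic consequence of the Walsh transform formula in Proposition \ref{npp}. Set $S=\sum_{\lambda\in U}(-1)^{g(\lambda)}$ and note first that $f(0)=g(0^{q-1})=g(0)$, so the claimed identity is $S=(-1)^{g(0)}$. Since $f$ is bent if and only if $W_f(\alpha)^2=q^2$, i.e. $W_f(\alpha)\in\{q,-q\}$ for every $\alpha\in\mathbb{F}_{q^2}$, my task is to translate this condition into a statement about $S$.

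For the sufficiency direction I would just substitute $S=(-1)^{g(0)}$ into both branches of Proposition \ref{npp}. For $\alpha\ne 0$ the two $(-1)^{g(0)}$ terms cancel, leaving $W_f(\alpha)=q\cdot(-1)^{g(\alpha^{1-q})}=\pm q$. For $\alpha=0$ the formula collapses to $W_f(0)=(-1)^{g(0)}+(q-1)(-1)^{g(0)}=q\cdot(-1)^{g(0)}=\pm q$. Hence every Walsh coefficient has absolute value $q$, so $f$ is bent.

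For the necessity direction I assume $f$ bent and focus on $\alpha\ne 0$. The map $\alpha\mapsto\alpha^{1-q}$ sends $\mathbb{F}_{q^2}^{*}$ onto $U$ (its image is the kernel of the norm, which is precisely $U$), so $(-1)^{g(\alpha^{1-q})}\in\{1,-1\}$ can take either sign as $\alpha$ varies. The bent condition forces
\[
(-1)^{g(0)}-S+(-1)^{g(\alpha^{1-q})}q\in\{q,-q\},
\]
so $(-1)^{g(0)}-S$ must lie in $\{0,-2q\}$ when the middle sign is $+1$ and in $\{0,2q\}$ when it is $-1$. The key estimate is then $|(-1)^{g(0)}-S|\le 1+|U|=q+2$, which for $q>2$ is strictly less than $2q$; this eliminates the $\pm 2q$ alternatives and forces $S=(-1)^{g(0)}=(-1)^{f(0)}$, as required.

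The argument is almost entirely routine once Proposition \ref{npp} is in hand; the only subtle point, and the main place where I would be careful, is the size estimate $q+2<2q$ that makes the hypothesis $q>2$ explicit. I would note that the case $q=2$ is excluded precisely because the bound $q+2<2q$ fails there, and that I never need to invoke the $\alpha=0$ formula of Proposition \ref{npp} for the necessity direction — it is automatically consistent once $S=(-1)^{g(0)}$ has been forced by the $\alpha\ne 0$ equations.
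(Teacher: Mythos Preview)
Your proof is correct and follows the same overall template as the paper --- both directions are read off from Proposition~\ref{npp}, with the necessity step hinging on a size estimate that uses $q>2$. The one difference worth noting is in the necessity direction: the paper evaluates $W_f(0)=(-1)^{g(0)}+(q-1)\theta=\pm q$ and then does a short case split on the value of $f(0)$ and the sign of $W_f(0)$, whereas you work with $W_f(\alpha)$ for $\alpha\ne 0$ and use the single bound $|(-1)^{g(0)}-S|\le 1+|U|=q+2<2q$ to kill the $\pm 2q$ alternatives. Your route is marginally cleaner, since it avoids the case analysis.

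One small imprecision: the remark that $(-1)^{g(\alpha^{1-q})}$ ``can take either sign as $\alpha$ varies'' is not always true (e.g.\ if $g$ is constant on $U$) and is in any case unnecessary for your argument. A single nonzero $\alpha$ already forces $(-1)^{g(0)}-S\in\{0,2q\}$ or $\{0,-2q\}$, and your size bound then gives $S=(-1)^{g(0)}$ regardless of which sign actually occurs. You might delete or rephrase that sentence.
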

\begin{proof}
Note that $f(0)=g(0)$.
If Eq.\eqref{eq2.2} holds, it is obvious that $W_{f}(\alpha)^2=q^2$ and then $f$ is bent. On the other hand, assume that $f$ is bent. From
$W_{f}(0)=\pm q$, we can see that
\begin{equation}\label{eq2.4}
(-1)^{f(0)}-\sum_{\lambda \in U}(-1)^{g({\lambda})}=q\left(\pm 1-\sum_{\lambda \in U}(-1)^{g({\lambda})}\right).
\end{equation}
Denote $\theta=\sum_{\lambda \in U}(-1)^{g({\lambda})}$ and $l=\sharp\{\,\lambda\in  U: g(\lambda)=0\}$. Then $\theta=l-(q+1-l)=2l-1-q$ is odd. If $f(0)=0$, then Eq.\eqref{eq2.4} becomes
$1-\theta=q(\pm 1-\theta)$. If $1-\theta=q(- 1-\theta)$, of course $\theta\ne-1$. Let $-1-\theta=2k$ for some integer $k\ne0$, we see that $1-\theta=q(- 1-\theta)=2kq$ does not hold for $q>2$, since $-q\le 1-\theta\le q+2$. We get that Eq.\eqref{eq2.4} holds only if $W_{f}(0)=q$ and $\theta=1$. If $f(0)=1$, we have $W_{f}(0)=-q$ and $\theta=-1$ in a similar way.

The proof is finished.
\end{proof}

As seen above,  the condition that $f(\omega^{q+1}x)=f(x)$ holds for all $x\in \mathbb{F}_{q^2}$ in Proposition \ref{newprop} is equivalent to the existence of some function $g$ satisfying $f(x)=g(x^{q-1})$. If $f(0)=0$, then the weight of the vector $(f(1),f(\omega),f(\omega^2),\ldots f(\omega^{q}))$ equals $\frac{q}{2}$ is equivalent to $$1=\sum_{i=0}^{q}(-1)^{f(\omega^i)}=\sum_{i=0}^{q}(-1)^{g(\omega^{i(q-1)})}=\sum_{\lambda\in U}(-1)^{g(\lambda)}=(-1)^{f(0)}.$$
Thus, the condition on the hyper-bentness in Proposition \ref{propyoussef} coincides with the characterization of the bentness in Eq.\eqref{eq2.2}. If $f(0)=1$, a similar discussion can be applied to $f+1$. This implies that Boolean function $f(x)=g(x^{q-1})$ is bent if and only if it is hyper-bent. Since the support of $f(x)=g(x^{q-1})$ is the union of some subspaces from the Desarguesian spread $\{\,u\mathbb{F}_{q}:u \in U\}$, we see that if $f(x)$ is bent, then either $f\in \mathcal{PS}^{-}$ or $f+1\in \mathcal{PS}^{-}$. Proposition \ref{npp} explicitly gives the Walsh transform of a Dillon-like function $f$, and its dual function (if it exists) is easily obtained. Precisely, if Eq.\eqref{eq2.2} is satisfied, i.e., $f(x)=g(x^{q-1})$ is bent, then
$W_{f}(\alpha)=q(-1)^{g(\alpha^{q^2-q})}$
for any $\alpha\in \mathbb{F}_{q^2}$, and then the dual is $\tilde{f}(x)=g(x^{q^2-q})$.

\subsection{Some Relative Exponential Sums}
Proposition \ref{prop1} reduces the bentness of a function with form \eqref{form2} to the sum
\begin{equation*}
\sum_{\lambda\in U}(-1)^{g(\lambda)},
\end{equation*}
which is not easy for general functions $g$. The strategies used in \cite{nianli,yulong} were dividing the exponential sum into some partial sums and then investigating the relationships between these partial sums and some other exponential sums.

The binary Kloosterman sum is defined as
$$\mathcal{K}_{n}(a)=\sum_{x\in \mathbb{F}_{2^n}}(-1)^{{\rm Tr}_{1}^{n}\left(\frac{1}{x}+ax\right)}$$
for each $a\in \mathbb{F}_{2^n}^{*}$.
Thanks to the results of Lachaud and Wolfmann in \cite{lachaud}, we know that the set $\{\,\mathcal{K}_{n}(a):a \in \mathbb{F}_{2^n}\}$ are exactly all multiples of $4$ in the range $[-2^{\frac{n}{2}+1}+1, 2^{\frac{n}{2}+1}+1]$.

The following lemma gives a variant of the Kloosterman sums.
\begin{lem}\label{lem1}
Let $n$ be a positive integer and $q=2^n$. Then for any $a\in \mathbb{F}_{q}^{*}$, we have
\begin{equation}\label{eq2.6}
\sum_{y\in \mathbb{F}_{q},{\rm Tr}_{1}^{n}(\frac{a}{y})=1}(-1)^{{\rm Tr}_{1}^{n}(y)}=-\frac{1}{2}{\mathcal{K}_{n}}(a).
\end{equation}
\end{lem}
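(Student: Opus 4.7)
The plan is to convert the condition $\Tr_1^n(a/y)=1$ inside the summation into a character expression, split the resulting sum, and then recognize a Kloosterman sum after an inversion substitution.

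First I would note that the condition forces $y \ne 0$, so the sum is really over $y\in\mathbb{F}_q^\ast$. I would then use the standard identity
\[
\mathbf{1}\{\Tr_1^n(a/y)=1\}=\frac{1-(-1)^{\Tr_1^n(a/y)}}{2}
\]
to rewrite the left-hand side of \eqref{eq2.6} as
\[
\frac{1}{2}\sum_{y\in \mathbb{F}_q^\ast}(-1)^{\Tr_1^n(y)}\;-\;\frac{1}{2}\sum_{y\in \mathbb{F}_q^\ast}(-1)^{\Tr_1^n(y)+\Tr_1^n(a/y)}.
\]

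Next I would evaluate the two pieces separately. The first sum is the additive character sum over $\mathbb{F}_q^\ast$, which is simply $-1$ since $\sum_{y\in\mathbb{F}_q}(-1)^{\Tr_1^n(y)}=0$. For the second sum, I would perform the substitution $y\mapsto 1/y$, which is a bijection on $\mathbb{F}_q^\ast$ and sends $\Tr_1^n(y)+\Tr_1^n(a/y)$ to $\Tr_1^n(1/y)+\Tr_1^n(ay)$; comparing with the given definition $\mathcal{K}_n(a)=\sum_{x\in\mathbb{F}_q}(-1)^{\Tr_1^n(1/x+ax)}$ (with the usual convention $1/0=0$, contributing $+1$ from $x=0$), this sum equals $\mathcal{K}_n(a)-1$.

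Combining the two evaluations gives
\[
\frac{1}{2}(-1)-\frac{1}{2}\bigl(\mathcal{K}_n(a)-1\bigr)=-\frac{1}{2}\mathcal{K}_n(a),
\]
which is the desired identity. There is no real obstacle here; the only subtlety worth flagging in the write-up is the convention regarding the term $x=0$ in the definition of $\mathcal{K}_n(a)$, which must be handled consistently when matching the transformed sum with the Kloosterman sum.
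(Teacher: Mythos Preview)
Your proof is correct and follows essentially the same character-sum approach as the paper: both arguments expand the trace condition via $(1-(-1)^{\Tr})/2$-type indicators and then recognize the Kloosterman sum after a substitution $y\mapsto 1/y$ (or equivalently $y\mapsto a/y$). The paper packages this by first rewriting the sum as $2^{n-1}-2\,\sharp D$ and then computing $\sharp D$ via a double-indicator expansion, whereas you split the character sum directly; your route is slightly more streamlined but not materially different.
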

\begin{proof}
Note that
\begin{eqnarray*}
  \sum_{y\in \mathbb{F}_{q},{\rm Tr}_{1}^{n}(\frac{a}{y})=1}(-1)^{{\rm Tr}_{1}^{n}(y)} &=&
  \sum_{y\in \mathbb{F}_{q},{\rm Tr}_{1}^{n}(\frac{a}{y})=1,{\rm Tr}_{1}^{n}(y)=0}1- \sum_{y\in \mathbb{F}_{q},{\rm Tr}_{1}^{n}(\frac{a}{y})=1,{\rm Tr}_{1}^{n}(y)=1}1  \\
   &=& 2^{n-1}-2\cdot \sharp D,
\end{eqnarray*}
where $D=\{y\in \mathbb{F}_{q}:{\rm Tr}_{1}^{n}(\frac{a}{y})=1,\,{\rm Tr}_{1}^{n}(y)=1\}$. It is clear that the cardinality $\sharp D$ satisfies
\begin{eqnarray*}
  4\cdot\sharp D &=& \sum_{u_{1},u_{2}\in \mathbb{F}_{2},y\in \mathbb{F}_{q}}(-1)^{u_{1}({\rm Tr}_{1}^{n}(y)+1)+u_{2}({\rm Tr}_{1}^{n}(\frac{a}{y})+1)}  \\
   &=& 2^n-\sum_{y\in \mathbb{F}_{q}}(-1)^{\Tr_{1}^{n}(y)}-\sum_{y\in \mathbb{F}_{q}}(-1)^{\Tr_{1}^{n}(\frac{a}{y})}+\sum_{y\in \mathbb{F}_{q}}(-1)^{{\rm Tr}_{1}^{n}(y+\frac{a}{y})}  \\
   &=& 2^n+\sum_{y\in \mathbb{F}_{q}}(-1)^{{\rm Tr}_{1}^{n}(y+\frac{a}{y})}  \\
   &=& 2^n+\sum_{y\in \mathbb{F}_{q}}(-1)^{{\rm Tr}_{1}^{n}(ay+\frac{1}{y})} \\
   &=& 2^n+\mathcal{K}_{n}(a),
\end{eqnarray*}
which gives Eq.\eqref{eq2.6}.

The proof is finished.
\end{proof}

The following lemma is a crucial result related to the exponential sum of a rational trace function, of which the numerator has degree one and the denominator has degree two. Interestingly, such a sum is also closely related to the Kloosterman sums.

\begin{lem}\label{lem4}
Let $n$ be a positive integer with $q=2^n$, and $A,B,\delta$ be elements in $\mathbb{F}_{q}$ with $A\ne 0$ and $\frac{B^2}{A^2}+\frac{B}{A}+\delta\ne 0$. Denote the exponential sum
$\mathcal{S}=\sum_{x\in \mathbb{F}_{q}}(-1)^{\Tr_{1}^{n}\left(\frac{Ax+B}{x^2+x+\delta}\right)}$ and $C=(AB+A^2\delta)^{\frac{1}{2}}$, we have:

{\rm(1)} $\mathcal{S}=(-1)^{\Tr_{1}^{n}(A)}+(-1)^{\Tr_{1}^{n}(\frac{B+C}{A})}-\mathcal{K}_{n}(B+C)(-1)^{\Tr_{1}^{n}(A)}$ if $\Tr_{1}^{n}(\delta)=1$;

{\rm(2)} $\mathcal{S}=-(-1)^{\Tr_{1}^{n}(A)}-(-1)^{\Tr_{1}^{n}(\frac{B+C}{A})}+\mathcal{K}_{n}(B+C)(-1)^{\Tr_{1}^{n}(A)}+2$ if $\Tr_{1}^{n}(\delta)=0$.
\end{lem}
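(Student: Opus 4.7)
The plan is to first normalise by $x \mapsto x + B/A$ to eliminate the constant $B$, then convert $\mathcal{S}$ into a Kloosterman-type sum via the Artin--Schreier substitution $z = x + \delta/x$ followed by a M\"obius involution. The two cases of the lemma will arise from whether $x^2 + x + \delta$ splits over $\mathbb{F}_q$, manifesting as a small pole correction.

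In Step~1, the substitution $x \mapsto x + B/A$ turns the kernel into $Ax/(x^2 + x + \delta')$ with $\delta' := B^2/A^2 + B/A + \delta$; since $\Tr_1^n$ commutes with squaring modulo $2$, one checks that $\delta' \ne 0$, $\Tr_1^n(\delta') = \Tr_1^n(\delta)$, and $C' := (A^2\delta')^{1/2} = B + C$, so the ratio $(B'+C')/A' = (B+C)/A$ is preserved. In Step~2, the identity $Ax/(x^2+x+\delta) = A/(z+1)$ for $x \ne 0$ and $z := x + \delta/x$ lets one fibre the sum along $x \mapsto z$ on $\mathbb{F}_q^*$: the fibre over $z = 0$ consists of the single point $x = \sqrt{\delta}$, each $z \ne 0$ with $\Tr_1^n(\sqrt{\delta}/z) = 0$ has two preimages, and all other fibres are empty. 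Writing the indicator as $\tfrac12(1 + (-1)^{\Tr_1^n(\sqrt{\delta}/z)})$ yields the decomposition
\[
\mathcal{S} \;=\; 1 + (-1)^{\Tr_1^n(A)} + \tilde T_1 + \tilde T_2 + \varepsilon,
\]
where $\tilde T_1 = \sum_{z \in \mathbb{F}_q \setminus \{0,1\}}(-1)^{\Tr_1^n(A/(z+1))}$, $\tilde T_2 = \sum_{z \in \mathbb{F}_q \setminus \{0,1\}}(-1)^{\Tr_1^n(A/(z+1) + \sqrt{\delta}/z)}$, and the correction $\varepsilon$ equals $0$ in case~(1) (because $\Tr_1^n(\delta) = 1$ forces $z = 1$ out of the image) and $2$ in case~(2) (each of the two poles $x = r,s$ of the rational function contributing the value $(-1)^0 = 1$ under the convention for the singular terms).

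In Step~3, the substitution $w = z + 1$ together with $\sum_{w \in \mathbb{F}_q^*}(-1)^{\Tr_1^n(A/w)} = -1$ gives $\tilde T_1 = -1 - (-1)^{\Tr_1^n(A)}$, while the M\"obius involution $w = z/(z+1)$ of $\mathbb{F}_q \setminus \{0,1\}$ transforms $A/(z+1) + \sqrt{\delta}/z$ into $Aw + \sqrt{\delta}/w + (A + \sqrt{\delta})$; a further substitution $w \mapsto \sqrt{\delta}/w$ converts the remaining sum into a Kloosterman integrand and produces
\[
\tilde T_2 \;=\; (-1)^{\Tr_1^n(A + \sqrt{\delta})}\,\mathcal{K}_n(A\sqrt{\delta}) - (-1)^{\Tr_1^n(A + \sqrt{\delta})} - 1.
\]
Using $A\sqrt{\delta} = B + C$ and $\Tr_1^n(\sqrt{\delta}) = \Tr_1^n(\delta) = \Tr_1^n((B+C)/A)$, the ``obvious'' terms cancel as $1 + (-1)^{\Tr_1^n(A)} + \tilde T_1 = 0$, leaving $\mathcal{S} = \tilde T_2 + \varepsilon$; splitting by the parity of $\Tr_1^n(\delta)$ then recovers formulas~(1) and~(2) verbatim. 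The hardest part will not be the Kloosterman extraction but the pole bookkeeping in case~(2): the constant $+2$ in the stated formula comes entirely from the convention assigning $(-1)^0 = 1$ at the two singularities, and this convention must be propagated consistently through the fibering step.
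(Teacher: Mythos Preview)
Your argument is correct, and it takes a genuinely different route from the paper's own proof.

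The paper does \emph{not} normalise $B$ away; instead it fibres $\mathcal{S}$ over the \emph{value} $b=\frac{Ax+B}{x^2+x+\delta}$ of the rational function. For each $b\ne 0,A$ it reduces the solvability of $bx^2+(A+b)x+(B+b\delta)=0$ to the single trace condition $\Tr_1^n\bigl(\delta+\tfrac{B+C}{A+b}\bigr)=0$, and then invokes the auxiliary identity $\sum_{\Tr_1^n(a/y)=1}(-1)^{\Tr_1^n(y)}=-\tfrac12\mathcal{K}_n(a)$ (their Lemma~\ref{lem1}) with $a=B+C$ after the shift $b\mapsto b+A$. The two cases arise from whether $x^2+x+\delta=0$ contributes one or three preimages to the value $b=0$.

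Your approach replaces this value-fibering by the substitution $z=x+\delta'/x$ after the normalisation $x\mapsto x+B/A$, and then reaches the Kloosterman sum directly through the M\"obius involution $w=z/(z+1)$ rather than through an auxiliary half-sum identity. This is arguably cleaner: the normalisation makes it transparent \emph{why} $B+C=A\sqrt{\delta'}$ and $(B+C)/A=\sqrt{\delta'}$ are the natural invariants (so that $\Tr_1^n((B+C)/A)=\Tr_1^n(\delta)$ is forced, a fact the paper only uses implicitly later in Lemma~\ref{lem5}), and you bypass Lemma~\ref{lem1} entirely. On the other hand, the paper's value-fibering keeps the bookkeeping for the singular terms slightly more visible: the ``three versus one'' count of zeros of the rational function immediately yields the $+2$ discrepancy, whereas in your route that $+2$ has to be tracked through the excluded fibre $z=1$. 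Both arguments handle this correctly; they simply organise the same pole correction differently.
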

\begin{proof}
We assume that $\frac{1}{0}$ is $0$. Let $\frac{Ax+B}{x^2+x+\delta}=b$, which is equivalent to
\begin{equation}\label{eq2.7}
   bx^2+(A+b)x+B+b\delta=0
\end{equation}
if $x^2+x+\delta\ne 0$. If $b=0$, then $\frac{Ax+B}{x^2+x+\delta}=0$ is equivalent to $Ax+B=0$ or $x^2+x+\delta=0$. If $b=A$, then Eq.\eqref{eq2.7} has exactly one solution. For $b\ne 0,A$, we know that Eq.\eqref{eq2.7} has two solutions if and only if $\Tr_{1}^{n}(\frac{Bb+b^2\delta}{(A+b)^2})=0$, which reduces to
\begin{eqnarray*}
&&\Tr_{1}^{n}\left(\frac{B(A+b+A)+(A+b+A)^2\delta}{(A+b)^2}\right)\\
&=&\Tr_{1}^{n}\left(\frac{B}{A+b}+\frac{AB+A^2\delta}{(A+b)^2}+\delta\right)\\
&=&\Tr_{1}^{n}\left(\delta+\frac{B+C}{A+b}\right)=0,
\end{eqnarray*}
where $C=(AB+A^2\delta)^{\frac{1}{2}}$. Observe that $(B+C)^2=\left(\frac{B^2}{A^2}+\frac{B}{A}+\delta\right)A^2$, then $B+C\ne0$ due to the assumption. We compute the sum $\mathcal{S}$ according to two cases:

(1) $\Tr_{1}^{n}(\delta)=1$. Then $x^2+x+\delta=0$ has no solutions in $\mathbb{F}_{q}$. Therefore,
\begin{eqnarray*}
  \mathcal{S}&=& 1+(-1)^{\Tr_{1}^{n}(A)}+2\sum_{b\in\mathbb{F}_{q}\setminus\{0,A\},\Tr_{1}^{n}(\frac{B+C}{A+b})=1}(-1)^{\Tr_{1}^{n}(b)} \\
   &=&1+(-1)^{\Tr_{1}^{n}(A)}+2\sum_{b\in\mathbb{F}_{q}\setminus\{0,A\},\Tr_{1}^{n}(\frac{B+C}{b})=1}(-1)^{\Tr_{1}^{n}(b+A)}  \\
   &=&1+(-1)^{\Tr_{1}^{n}(A)}+2(-1)^{\Tr_{1}^{n}(A)}\\
   &&\cdot\left(\sum_{b\in \mathbb{F}_{q},\Tr_{1}^{n}(\frac{B+C}{b})=1}(-1)^{\Tr_{1}^{n}(b)}-(-1)^{\Tr_{1}^{n}(A)}\frac{1-(-1)^{\Tr_{1}^{n}(\frac{B+C}{A})}}{2}\right)\\
   &=&1+(-1)^{\Tr_{1}^{n}(A)}+(-1)^{\Tr_{1}^{n}(A)}\left(-\mathcal{K}_{n}(B+C)-(-1)^{\Tr_{1}^{n}(A)}(1-(-1)^{\Tr_{1}^{n}(\frac{B+C}{A})})\right)\\
   &=&(-1)^{\Tr_{1}^{n}(A)}+(-1)^{\Tr_{1}^{n}(\frac{B+C}{A})}-\mathcal{K}_{n}(B+C)(-1)^{\Tr_{1}^{n}(A)},
\end{eqnarray*}
where the second from the last equality is obtained from Lemma \ref{lem1}.

(2) $\Tr_{1}^{n}(\delta)=0$. In this case, there are three elements $x$ such that $\frac{Ax+B}{x^2+x+\delta}=0$.
Thus,
\begin{eqnarray*}
  \mathcal{S} &=& 3+\sum_{x\in\mathbb{F}_{q}, Ax+B\ne 0,x^2+x+\delta\ne0}(-1)^{\Tr_{1}^{n}\left(\frac{Ax+B}{x^2+x+\delta}\right)}\\
  &=& 3+(-1)^{\Tr_{1}^{n}(A)}+2\sum_{b\in\mathbb{F}_{q}\setminus\{0,A\},\Tr_{1}^{n}(\frac{B+C}{A+b})=0}(-1)^{\Tr_{1}^{n}(b)}\\
  &=& 3+(-1)^{\Tr_{1}^{n}(A)}-2-2(-1)^{\Tr_{1}^{n}(A)}-2\sum_{b\in\mathbb{F}_{q}\setminus\{0,A\},\Tr_{1}^{n}(\frac{B+C}{A+b})=1}(-1)^{\Tr_{1}^{n}(b)}\\
  &=& 1-(-1)^{\Tr_{1}^{n}(A)}-2\sum_{b\in\mathbb{F}_{q}\setminus\{0,A\},\Tr_{1}^{n}(\frac{B+C}{A+b})=1}(-1)^{\Tr_{1}^{n}(b)},
\end{eqnarray*}
where the third equality holds due to the fact
\begin{eqnarray*}
&&\sum_{b\in\mathbb{F}_{q}\setminus\{0,A\},\Tr_{1}^{n}(\frac{B+C}{A+b})=0}(-1)^{\Tr_{1}^{n}(b)}+
\sum_{b\in\mathbb{F}_{q}\setminus\{0,A\},\Tr_{1}^{n}(\frac{B+C}{A+b})=1}(-1)^{\Tr_{1}^{n}(b)}\\
&&=\sum_{b\in\mathbb{F}_{q}\setminus\{0,A\}}(-1)^{\Tr_{1}^{n}(b)}=-1-(-1)^{\Tr_{1}^{n}(A)}.
\end{eqnarray*}
By the computations in case (1), it follows that
\begin{eqnarray*}
  \mathcal{S}&=& 1-(-1)^{\Tr_{1}^{n}(A)}+\mathcal{K}_{n}(B+C)(-1)^{\Tr_{1}^{n}(A)}+1-(-1)^{\Tr_{1}^n{(\frac{B+C}{A})}}\\
  &=&  -(-1)^{\Tr_{1}^{n}(A)}-(-1)^{\Tr_{1}^{n}(\frac{B+C}{A})}+\mathcal{K}_{n}(B+C)(-1)^{\Tr_{1}^{n}(A)}+2.
\end{eqnarray*}
The proof is finished.
\end{proof}
For any $a\in \mathbb{F}_{2^{2n}}$ and $b\in \mathbb{F}_{2^{2n}}^{*}$, we introduce two exponential sums on the unit circle $U$ as
\begin{equation}\label{eq3.1}
  \xi(a,b)=\sum_{\lambda \in U}(-1)^{\Tr_{1}^{2n}\left(\frac{a}{\lambda+b}\right)} ~\hbox{and}~ M(a,b)=\sum_{\lambda \in U\setminus\{1\}}(-1)^{\Tr_{1}^{2n}\left(\frac{a}{\lambda+b}\right)}.
\end{equation}
Obviously $\xi(0,b)=q+1$. From Proposition \ref{npp}, we know that $\xi(a,b)$ is connected with the Walsh transform of these functions in set $\mathcal{H}$.

By noting that $$\xi(a,b)=(-1)^{\Tr_{1}^{2n}(\frac{a}{1+b})}+M(a,b)$$
and Lemma \ref{lem4}, we can establish a relation between $\xi(a,b)$ and Kloosterman sum through $M(a,b)$, which is important for all characterizations in Section \ref{mainresults}.
\begin{lem}\label{lem5}
Let $q=2^n$ and $a,b\in \mathbb{F}_{q^2}^{*}$, we have
\begin{equation}\label{eq3.2}
\xi(a,b)=\left\{\begin{array}{ll}
     1+(-1)^{\Tr_{1}^{n}(a\overline{a})}q, & {\rm if}\, b\in U ~\hbox{and}~ \Tr_{n}^{2n}(a\overline{b})=0 , \\
     1,    &{\rm if}\, b\in U ~\hbox{and}~ \Tr_{n}^{2n}(a\overline{b})\ne0 ,  \\
     \left(1-\mathcal{K}_{n}\left(\frac{a\overline{a}}{1+b^2\overline{b}^2}\right)\right)(-1)^{\Tr_{1}^{2n}\left(\frac{\overline{a}b}{1+b\overline{b}}\right)},    & {\rm if}\,b\in \mathbb{F}_{q^2}^{*}\setminus U.
     \end{array}\right.
\end{equation}
\end{lem}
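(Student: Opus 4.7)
The plan is to isolate the $\lambda = 1$ term,
\[
\xi(a,b) = (-1)^{\Tr_{1}^{2n}(a/(1+b))} + M(a,b),
\]
and then parametrize $U\setminus\{1\}$ via Lemma \ref{lem0} as $\lambda = (u+A_0)/(u+\overline{A_0})$ with $u \in \mathbb{F}_{2^n}$ and some fixed $A_0 \in \mathbb{F}_{2^{2n}} \setminus \mathbb{F}_{2^n}$. For $b \neq 1$, the identity $\lambda + b = (1+b)(u+w)/(u+\overline{A_0})$ with $w = (A_0 + b\overline{A_0})/(1+b)$ yields the decomposition
\[
\frac{a}{\lambda+b} = \frac{a}{1+b} + \frac{c}{u+w}, \qquad c = \frac{a(A_0+\overline{A_0})}{(1+b)^2},
\]
so the $u$-independent part factors out of the character sum. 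Using $u = \overline{u}$ to combine $c/(u+w)$ with its conjugate, the inner exponent reduces to
\[
\Tr_{1}^{2n}\!\left(\frac{c}{u+w}\right) = \Tr_{1}^{n}\!\left(\frac{(c+\overline{c})u + (c\overline{w} + \overline{c}w)}{u^2 + (w+\overline{w})u + w\overline{w}}\right),
\]
a rational function with all coefficients in $\mathbb{F}_{2^n}$.

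For Case 3 ($b\overline{b} \neq 1$), the quantity $G := w + \overline{w} = (A_0+\overline{A_0})(1+b\overline{b})/[(1+b)(1+\overline{b})]$ is nonzero, so the substitution $u = Gx$ converts the inner sum into the form of Lemma \ref{lem4} with parameters (writing $A, B, \delta, C$ for Lemma \ref{lem4}'s variables) $A = (c+\overline{c})/G$, $B = (c\overline{w}+\overline{c}w)/G^2$, $\delta = w\overline{w}/G^2$. Since $w \notin \mathbb{F}_{2^n}$, the polynomial $x^2+x+\delta$ is irreducible over $\mathbb{F}_{2^n}$, so $\Tr_{1}^{n}(\delta) = 1$ and we are in Case (1) of Lemma \ref{lem4}. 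The pivotal algebraic identity, verified by a short expansion in characteristic two, is
\[
G(c+\overline{c})(c\overline{w}+\overline{c}w) + (c+\overline{c})^2 w\overline{w} = (c\overline{w}+\overline{c}w)^2 + c\overline{c}\, G^2,
\]
from which one extracts $B + C = \sqrt{c\overline{c}}/G$, and therefore
\[
\mathcal{K}_{n}(B+C) = \mathcal{K}_{n}(c\overline{c}/G^2) = \mathcal{K}_{n}\!\left(\frac{a\overline{a}}{(1+b\overline{b})^2}\right) = \mathcal{K}_{n}\!\left(\frac{a\overline{a}}{1+b^2\overline{b}^2}\right).
\]
Moreover $(B+C)/A = \sqrt{c\overline{c}}/(c+\overline{c})$, and when $c \neq \overline{c}$ the element $c/(c+\overline{c}) \in \mathbb{F}_{2^{2n}} \setminus \mathbb{F}_{2^n}$ is a root of $y^2+y + c\overline{c}/(c+\overline{c})^2 = 0$, forcing $\Tr_{1}^{n}((B+C)/A) = 1$. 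Combined with the universal sign congruence
\[
\Tr_{1}^{2n}(a/(1+b)) + \Tr_{1}^{n}((c+\overline{c})/G) \equiv \Tr_{1}^{2n}(\overline{a}b/(1+b\overline{b})) \pmod 2,
\]
which I would verify by clearing denominators and expanding in $a,\overline{a},b,\overline{b}$, these identities reassemble into the Case 3 formula. The degenerate sub-case $c = \overline{c}$ (where $A = 0$ falls outside the formal scope of Lemma \ref{lem4}) is handled by a direct application of Lemma \ref{lem1} to the residual sum $\sum_x (-1)^{\Tr_{1}^{n}(B/(x^2+x+\delta))}$, which evaluates to $-\mathcal{K}_{n}(c/G)$ and matches the formula.

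For Cases 1 and 2 ($b \in U$, so $b\overline{b} = 1$), the reduction fails because $G = 0$ and $w \in \mathbb{F}_{2^n}$. One verifies that $u = w$ is precisely the index with $\lambda(w) = b$, so this term contributes $1$ by the convention $a/0 = 0$. For $u \neq w$ the exponent collapses to $\Tr_{1}^{n}((c+\overline{c})/(u+w))$, and after substituting $v = u+w$ the inner sum becomes $\sum_{v \in \mathbb{F}_{2^n}^{*}}(-1)^{\Tr_{1}^{n}((c+\overline{c})/v)}$, equal to $q-1$ if $c+\overline{c} = 0$ and to $-1$ otherwise. A short manipulation using $\overline{b} = 1/b$ yields $c+\overline{c} = 0 \iff a = \overline{a}\, b^2 \iff \Tr_{n}^{2n}(a\overline{b}) = 0$, matching the Case 1 condition. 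Plugging back gives $\xi(a,b) = 1$ in Case 2 and $\xi(a,b) = 1 + (-1)^{\Tr_{1}^{2n}(a/(1+b))}q$ in Case 1; the latter exponent simplifies to $\Tr_{1}^{n}(a\overline{a})$ since $a = \overline{a}b^2$ forces $a\overline{a} = (\overline{a}b)^2$ and $\Tr_{1}^{2n}(a/(1+b)) = \Tr_{1}^{n}(\overline{a}b)$. The excluded value $b = 1$, where the initial decomposition breaks, is treated by direct parametrization: $\lambda + 1 = (A_0+\overline{A_0})/(u+\overline{A_0})$ makes $a/(\lambda+1)$ affine in $u$, and a Fourier-type sum produces $0$ or $q(-1)^{\Tr_{1}^{n}(a)}$ according to whether $\Tr_{n}^{2n}(a) \neq 0$ or $= 0$, consistent with both cases of the formula.

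The main obstacle I anticipate is the algebraic bookkeeping in Case 3: the characteristic-two factorization identity and the sign congruence both demand patient expansion, and the edge case $c = \overline{c}$ requires a parallel computation via Lemma \ref{lem1}. A minor subtlety is that the $a/0 = 0$ convention plays opposite roles across cases — in Case 3 no $\lambda \in U$ is singular, whereas in Cases 1 and 2 the singular term at $\lambda = b$ contributes precisely the additive constant $1$ appearing in the formulas.
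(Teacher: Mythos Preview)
Your proof is correct and takes a genuinely different route from the paper's. The paper first reduces to the special case $a\in\mathbb{F}_{q}^{*}$ via the identity $\xi(a,b)=\xi(a\overline{a},a^{q-1}b^{2})$, and then splits according to whether $b\in\mathbb{F}_{q}^{*}$ or $b\notin\mathbb{F}_{q}$: for $b\in\mathbb{F}_{q}^{*}$ it uses the two-to-one map $\lambda\mapsto\lambda+\overline{\lambda}$ together with Lemma~\ref{lem2.1} and Lemma~\ref{lem1}, while for $b\notin\mathbb{F}_{q}$ it applies Lemma~\ref{lem0} with the specific choice $A_{0}=b$ and then Lemma~\ref{lem4}. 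You instead keep $a$ general, pick an arbitrary $A_{0}$ in Lemma~\ref{lem0}, and split directly along the statement's own dichotomy $b\in U$ versus $b\notin U$; this unifies the paper's two sub-cases of $b\notin U$ into a single computation and avoids Lemma~\ref{lem2.1} entirely. The price is the heavier symbolic bookkeeping you anticipate---the characteristic-two identity and the sign congruence---but both are routine: in fact your ``sign congruence'' is not merely a trace identity but an exact algebraic one, since
\[
\frac{a}{1+b}+\frac{\overline{a}}{1+\overline{b}}+\frac{c+\overline{c}}{G}+\frac{a\overline{b}+\overline{a}b}{1+b\overline{b}}=0
\]
holds identically over the common denominator $(1+b)(1+\overline{b})(1+b\overline{b})$. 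The paper's preliminary reduction buys lighter algebra (e.g.\ $a\overline{a}$ collapses to $a^{2}$), while your approach buys a cleaner case structure matching the final statement and dispenses with one auxiliary lemma.
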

\begin{proof}
Firstly we observe that, if $a\in \mathbb{F}_{q}^{*}$, then Eq.\eqref{eq3.2} becomes
\begin{equation}\label{eq3.4}
\xi(a,b)=\left\{\begin{array}{ll}
     1+(-1)^{\Tr_{1}^{n}(a)}q, & {\rm if}\,b=1,\\
     1,    & {\rm if}\,b\in U\setminus\{1\},  \\
     \left(1-\mathcal{K}_{n}\left(\frac{a}{1+b\overline{b}}\right)\right)(-1)^{\Tr_{1}^{n}\left(\frac{a(b+\overline{b})}{1+b\overline{b}}\right)},    & {\rm if}\,b\in \mathbb{F}_{q^2}^{*}\setminus U.
     \end{array}\right.
\end{equation}
If $a\notin \mathbb{F}_{q}$, we have the following relation
\begin{eqnarray*}
  \xi(a,b) &=& \sum_{\lambda\in U}(-1)^{\Tr_{1}^{2n}\left(\frac{a}{\lambda+b}\right)}  \\
   &=& \sum_{\lambda\in U}(-1)^{\Tr_{1}^{2n}\left(\frac{a^2}{\lambda^2+b^2}\right)}  \\
   &=& \sum_{\lambda\in U}(-1)^{\Tr_{1}^{2n}\left(\frac{a^{1-q}a^{1+q}}{\lambda+b^2}\right)}  \\
   &=& \sum_{\lambda\in U}(-1)^{\Tr_{1}^{2n}\left(\frac{a^{1-q}a^{1+q}}{a^{1-q}\lambda+b^2}\right)}  \\
   &=& \sum_{\lambda\in U}(-1)^{\Tr_{1}^{2n}\left(\frac{a\overline{a}}{\lambda+a^{q-1}b^2}\right)}  \\
   &=& \xi(a\overline{a},a^{q-1}b^2).
\end{eqnarray*}
If $b\in U$, then $a^{q-1}b^2=1$ is equivalent to $\Tr_{n}^{2n}(a\overline{b})=0$. For every $a\in \mathbb{F}_{q^2}^{*}$, we have that $b\in U$ if and only if $a^{q-1}b^2\in U$. Then Eq.\eqref{eq3.2} follows from Eq.\eqref{eq3.4}. Thus it suffices for us to prove Eq.\eqref{eq3.4}.

 By the relation between $\xi(a,b)$ and $M(a,b)$, we proceed the proof by the computation of $M(a,b)$, according to the cases $b\in \mathbb{F}_{q}^{*}$ and $b\notin \mathbb{F}_{q}$.

Case $b\in \mathbb{F}_{q}^{*}$. From $\lambda\in U\setminus\{1\}$ and
$${\rm Tr}_{1}^{2n}\left(\frac{a}{\lambda+b}\right)={\rm Tr}_{1}^{n}\left(\frac{a}{\lambda+b}+\frac{a}{\overline{\lambda}+{b}}\right)={\rm Tr}_{1}^{n}\left(\frac{a(\lambda+\overline{\lambda})}{1+b^2+b(\lambda+\overline{\lambda})}\right),$$
it is easy to see that $M(a,1)=(-1)^{\Tr_{1}^{n}(a)}q$. Then, assume $b\ne 1$ and denote $\alpha=\lambda+\overline{\lambda}$. From Lemma \ref{lem2.1}(1), there are exactly two elements $\lambda\in U\setminus\{1\}$ corresponding to one $\alpha$ if and only if $\Tr_{1}^{n}(\alpha^{-1})=1$, then
\begin{eqnarray*}
M(a,b)&=&2\sum_{{\rm Tr}_{1}^{n}(\alpha^{-1})=1}(-1)^{{\rm Tr}_{1}^{n}\left({\frac{a\alpha}{1+b^2+b\alpha}}\right)}\,\left({\rm letting}\, \beta=\frac{a\alpha}{1+b^2+b\alpha}\right)\\
&=&2\sum_{{\rm Tr}_{1}^{n}\left(\frac{a+b\beta}{(1+b^2)\beta}\right)=1}(-1)^{{\rm Tr}_{1}^{n}\left(\beta\right)}\\
&=&2\sum_{{\rm Tr}_{1}^{n}\left(\frac{a}{\beta(1+b^2)}\right)=1}(-1)^{{\rm Tr}_{1}^{n}(\beta)}\\
&=&-\mathcal{K}_{n}\left(\frac{a}{1+b^2}\right)
\end{eqnarray*}
due to Lemma \ref{lem1}. Then Eq.\eqref{eq3.4} holds.

Case $b\notin \mathbb{F}_{q}$. The bijectivity between $U\setminus\{1\}$ and $\mathbb{F}_{q}$ in Lemma \ref{lem0} gives
\begin{eqnarray}
M(a,b)&=&\sum_{z\in \mathbb{F}_{q}}(-1)^{\Tr_{1}^{2n}\left(\frac{a}{\frac{z+b}{z+\overline{b}}+b}\right)}\nonumber\\
&=&\sum_{z\in\mathbb{F}_{q}}(-1)^{\Tr_{1}^{2n}\left(\frac{a(z+\overline{b})}{(1+b)z+b+b\overline{b}}\right)}\nonumber\\
&=&\sum_{z\in\mathbb{F}_{q}}(-1)^{\Tr_{1}^{2n}\left(\frac{\frac{a}{1+b}z+\frac{a\overline{b}}{1+b}}{z+\frac{b(1+\overline{b})}{1+b}}\right)}\label{eq3.3},
\end{eqnarray}
which admits further computations of $M(a,b)$ according to two subcases $b\in U$ or $b\notin U$.

(1) $b\in U$. Note that $b\ne 1$ due to the assumption $b\notin \mathbb{F}_{q}$. We directly have
\begin{eqnarray*}
M(a,b)&=&\sum_{z\in\mathbb{F}_{q}}(-1)^{\Tr_{1}^{2n}\left(\frac{\frac{a}{1+b}z+\frac{a\overline{b}}{1+b}}{z+1}\right)}\\
&=&1+\sum_{z\ne 1}(-1)^{\Tr_{1}^{2n}\left(\frac{\frac{a}{1+b}z+\frac{a\overline{b}}{1+b}}{z+1}\right)}\\
&=&1+\sum_{z\ne 1}(-1)^{\Tr_{1}^{2n}\left(\frac{\frac{a}{1+b}(z+1)+\frac{a\overline{b}+a}{1+b}}{z+1}\right)}\\
&=&1+(-1)^{\Tr_{1}^{2n}\left(\frac{a}{1+b}\right)}\sum_{z\ne 1}(-1)^{\Tr_{1}^{2n}\left(\frac{a(1+\overline{b})}{(z+1)(1+b)}\right)}\\
&=&1+(-1)^{\Tr_{1}^{2n}\left(\frac{a}{1+b}\right)}\sum_{z\ne 1}(-1)^{\Tr_{1}^{n}\left(\frac{a(b+\overline{b})}{z+1}\right)}\\
&=&1-(-1)^{\Tr_{1}^{2n}\left(\frac{a}{1+b}\right)}.
\end{eqnarray*}

(2) $b\notin U$. In this case, Eq.\eqref{eq3.3} becomes
\begin{eqnarray*}
M(a,b)&=&\sum_{z\in\mathbb{F}_{q}}(-1)^{\Tr_{1}^{2n}\left(\frac{a}{1+b}+\frac{\frac{a\overline{b}}{1+b}+\frac{ab(1+\overline{b})}{1+b^2}}{z+\frac{b(1+\overline{b})}{1+b}}\right)}\\
&=&(-1)^{\Tr_{1}^{2n}(\frac{a}{1+b})}\sum_{z\in\mathbb{F}_{q}}(-1)^{\Tr_{1}^{n}\left(\frac{\frac{a\overline{b}}{1+b}+\frac{ab(1+\overline{b})}{1+b^2}}{z+\frac{b(1+\overline{b})}{1+b}}
+\frac{\frac{ab}{1+\overline{b}}+\frac{a\overline{b}(1+{b})}{1+\overline{b}^2}}{z+\frac{\overline{b}(1+{b})}{1+\overline{b}}}\right)}\\
&=&(-1)^{\Tr_{1}^{2n}(\frac{a}{1+b})}\sum_{z\in\mathbb{F}_{q}}(-1)^{\Tr_{1}^{n}\left(\frac{\frac{a(b+\overline{b})}{1+b^2}}{z+\frac{b(1+\overline{b})}{1+b}}
+\frac{\frac{a(b+\overline{b})}{1+\overline{b}^2}}{z+\frac{\overline{b}(1+{b})}{1+\overline{b}}}\right)}.
\end{eqnarray*}
Note that
\begin{eqnarray*}\frac{\frac{a(b+\overline{b})}{1+b^2}}{z+\frac{b(1+\overline{b})}{1+b}}
+\frac{\frac{a(b+\overline{b})}{1+\overline{b}^2}}{z+\frac{\overline{b}(1+{b})}{1+\overline{b}}}&=&
\frac{\left(\frac{1}{1+\overline{b}^2}+\frac{1}{1+b^2}\right)a(b+\overline{b})z+\frac{a(b+\overline{b})^2}{(1+b)(1+\overline{b})}}
{\left(z+\frac{\overline{b}(1+{b})}{1+\overline{b}}\right)\left(z+\frac{b(1+\overline{b})}{1+b}\right)}\\
&=&\frac{A_{1}z+B_{1}}{z^2+A_{2}z+B_{2}},
\end{eqnarray*}
where
$$\left\{\begin{array}{ll}
   A_{1}&= \frac{a(b+\overline{b})^3}{(1+b^2)(1+\overline{b}^2)},  \\
   B_{1}&= \frac{a(b+\overline{b})^2}{(1+b)(1+\overline{b})}, \\
   A_{2}&= \frac{(b+\overline{b})(1+b\overline{b})}{(1+b)(1+\overline{b})},\\
   B_{2}&=  b\overline{b}.
\end{array}\right.$$
Since $A_{2}\ne 0$, we further obtain
\begin{eqnarray}
  M(a,b) &=& (-1)^{\Tr_{1}^{2n}\left(\frac{a}{1+b}\right)}\sum_{z\in\mathbb{F}_{q}}(-1)^{\Tr_{1}^{n}\left(\frac{A_{1}z+B_{1}}{z^2+A_{2}z+B_{2}}\right)}  ~~~~~~(\hbox{substituting}~ z ~\hbox{with} ~A_{2}z)\nonumber\\
   &=& (-1)^{\Tr_{1}^{2n}\left(\frac{a}{1+b}\right)}\sum_{z\in\mathbb{F}_{q}}(-1)^{\Tr_{1}^{n}\left(\frac{Az+B}{z^2+z+\delta}\right)} , \label{xeq2}
\end{eqnarray}
where
$$\left\{
\begin{array}{cl}
A&=\frac{A_{1}}{A_{2}}=\frac{a(b+\overline{b})^2}{(1+b\overline{b})(1+b)(1+\overline{b})},\\
B&=\frac{B_{1}}{A_{2}^2}=\frac{a(1+b)(1+\overline{b})}{(1+b\overline{b})^2}\\
\delta&=\frac{B_{2}}{A_{2}^2}=\frac{b\overline{b}(1+b^2)(1+\overline{b}^2)}{(b+\overline{b})^2(1+b\overline{b})^2}.
\end{array}\right.$$
Observe that
\begin{eqnarray*}
\delta&=&
\frac{b\overline{b}\left((1+b\overline{b})^2+(b+\overline{b})^2\right)}{(b+\overline{b})^2(1+b\overline{b})^2}\\
&=&\frac{b\overline{b}}{(b+\overline{b})^2}+\frac{1}{1+b\overline{b}}+\frac{1}{(1+b\overline{b})^2}.
\end{eqnarray*}
By the assumption $b\notin \mathbb{F}_{q}$, and that $b$ and $\overline{b}$ are two solutions of the quadratic equation $x^2+(b+\overline{b})x+b\overline{b}=0$, we have $\Tr_{1}^{n}(\delta)=\Tr_{1}^{n}\left(\frac{b\overline{b}}{(b+\overline{b})^2}\right)=1$. Then according to Lemma \ref{lem4}, we can rewrite Eq.\eqref{xeq2} as
\begin{eqnarray*}
  (-1)^{\Tr_{1}^{2n}(\frac{a}{1+b})}\cdot M(a,b) &=& (-1)^{\Tr_{1}^{n}(A)}+(-1)^{\Tr_{1}^{n}\left(\frac{B+C}{A}\right)}-\mathcal{K}_{n}(B+C)(-1)^{\Tr_{1}^{n}(A)},
\end{eqnarray*}
where $C=(AB+A^2\delta)^{\frac{1}{2}}$. It is straightforward that
\begin{eqnarray*}
&&AB+A^2\delta\\
&=&\frac{a(b+\overline{b})^2}{(1+b\overline{b})(1+b)(1+\overline{b})}\frac{a(1+b)(1+\overline{b})}{(1+b\overline{b})^2}
  +\frac{a^2(b+\overline{b})^4}{(1+b\overline{b})^2(1+b)^2(1+\overline{b})^2}\frac{b\overline{b}(1+b)^2(1+\overline{b})^2}{(b+\overline{b})^2(1+b\overline{b})^2}\\
&=&\frac{a^2(b+\overline{b})^2}{(1+b\overline{b})^3}+\frac{a^2(b+\overline{b})^2b\overline{b}}{(1+b\overline{b})^4}\\
&=&\frac{a^2(b+\overline{b})^2}{(1+b\overline{b})^3}
\left(1+\frac{b\overline{b}}{1+b\overline{b}}\right)=\frac{a^2(b+\overline{b})^2}{(1+b\overline{b})^4},
\end{eqnarray*}
which gives
$$C=\frac{a(b+\overline{b})}{(1+b\overline{b})^2}$$
and then
$$B+C=\frac{a(b+\overline{b})+a(1+b\overline{b}+b+\overline{b})}{(1+b\overline{b})^2}=\frac{a}{1+b\overline{b}}.$$
Another important fact is $\Tr_{1}^{n}\left(\frac{B+C}{A}\right)=1$, due to
\begin{eqnarray*}
              \frac{B+C}{A} &=&\frac{a}{1+b\overline{b}}\frac{(1+b\overline{b})(1+b)(1+\overline{b})}{a(b+\overline{b})^2}  \\
               &=&\frac{(1+b)(1+\overline{b})}{(b+\overline{b})^2}  \\
               &=& \frac{b\overline{b}}{(b+\overline{b})^2}+\frac{1}{b+\overline{b}}+\frac{1}{(b+\overline{b})^2}.
            \end{eqnarray*}
Together with
\begin{eqnarray*}
  A+\frac{a}{1+b}+\frac{a}{1+\overline{b}} &=&\frac{a(b+\overline{b})^2}{(1+b\overline{b})(1+b)(1+\overline{b})}+\frac{a(b+\overline{b})}{(1+b)(1+\overline{b})}  \\
   &=& \frac{a(b+\overline{b})}{(1+b)(1+\overline{b})}\left(\frac{b+\overline{b}}{1+b\overline{b}}+1\right) \\
   &=& \frac{a(b+\overline{b})}{1+b\overline{b}},
\end{eqnarray*}
we then easily obtain
$$M(a,b)=\left(1-\mathcal{K}_{n}\left(\frac{a}{1+b\overline{b}}\right)\right)(-1)^{\Tr_{1}^{n}\left(\frac{a(b+\overline{b})}{1+b\overline{b}}\right)}-(-1)^{\Tr_{1}^{2n}\left(\frac{a}{1+b}\right)}.$$
Hence, the above discussion on $M(a,b)$ gives $\xi(a,b)$ in Eq.\eqref{eq3.4}.

The proof is finished.
\end{proof}

\section{Main Results}\label{mainresults}
In this section, based on the family $\mathcal{H}=\{\Tr_{1}^{m}\left(\frac{a}{x^{q-1}+b}\right):a,b\in \mathbb{F}_{q^2}^{*}\}$, we consider the bentness of functions $h=F(f_{1},f_{2},\ldots,f_{t})$ in \eqref{form3} on $\mathbb{F}_{q^2}$ with $q=2^n$ and $f_{i}\in \mathcal{H}$.

Recall that $h$ is also Dillon-like, then by Proposition \ref{prop1} and letting
$$\mathcal{T}=\sum_{\lambda\in U}(-1)^{F\left(\Tr_{1}^{m}\left(\frac{a_{1}}{\lambda+b}\right),\Tr_{1}^{m}\left(\frac{a_{2}}{\lambda+b}\right),\ldots,\Tr_{1}^{m}\left(\frac{a_{t}}{\lambda+b}\right)\right)} ,$$
we know that $h$ is bent if and only if
\begin{equation*}
  \mathcal{T}=(-1)^{F\left(\Tr_{1}^{m}\left(\frac{a_{1}}{b}\right),\Tr_{1}^{m}\left(\frac{a_{2}}{b}\right),\ldots,\Tr_{1}^{m}\left(\frac{a_{t}}{b}\right)\right)}.
\end{equation*}
Denoting the Walsh transform of $F(X_{1},X_{2},\ldots,X_{t})$ by $\{W_{F}(\alpha):\alpha\in \mathbb{F}_{2}^{t}\}$. It follows from
the inverse Walsh transform of $F$ that
$$2^t\cdot(-1)^{F(X_{1},X_{2},\ldots,X_{t})}=\sum_{\alpha\in \mathbb{F}_{2}^{t}}W_{F}(\alpha)(-1)^{\sum_{i=1}^{t}s_{i}X_{i}},$$
where $\alpha=(s_{1},s_{2},\ldots,s_{t})$. By replacing the variable $X_{i}$ with $\Tr_{1}^{m}\left(\frac{a_{i}}{\lambda+b}\right)$ for $i=1,2,\ldots,t$, we have
\begin{eqnarray*}
&&2^t\cdot(-1)^{F\left(\Tr_{1}^{m}\left(\frac{a_{1}}{\lambda+b}\right),\Tr_{1}^{m}\left(\frac{a_{2}}{\lambda+b}\right),\ldots,\Tr_{1}^{m}\left(\frac{a_{t}}{\lambda+b}\right)\right)}\\
&=&\sum_{\alpha\in \mathbb{F}_{2}^{t}}W_{F}(\alpha)(-1)^{\sum_{i=1}^{t}s_{i}\Tr_{1}^{m}\left(\frac{a_{i}}{\lambda+b}\right)}
\end{eqnarray*}
and then
\begin{eqnarray*}
&&\sum_{\lambda\in U}2^t\cdot(-1)^{F\left(\Tr_{1}^{m}\left(\frac{a_{1}}{\lambda+b}\right),\Tr_{1}^{m}\left(\frac{a_{2}}{\lambda+b}\right),\ldots,\Tr_{1}^{m}\left(\frac{a_{t}}{\lambda+b}\right)\right)}\\
&=&\sum_{\lambda\in U}\sum_{\alpha\in \mathbb{F}_{2}^{t}}W_{F}(\alpha)(-1)^{\sum_{i=1}^{t}s_{i}\Tr_{1}^{m}\left(\frac{a_{i}}{\lambda+b}\right)}\\
&=&\sum_{\alpha\in \mathbb{F}_{2}^{t}}W_{F}(\alpha)\sum_{\lambda\in U}(-1)^{\Tr_{1}^{m}\left(\frac{\sum_{i=1}^{t}s_{i}a_{i}}{\lambda+b}\right)}\\
&=&\sum_{\alpha\in \mathbb{F}_{2}^{t}}W_{F}(\alpha)\xi\left(\sum_{i=1}^{t}s_{i}a_{i},b\right),
\end{eqnarray*}
where $\xi(\cdot,b)$ is defined as in Eq.\eqref{eq3.1}.
Therefore, to characterize the bentness of $h$, it is equivalent to find these parameters $a_{1},a_{2},\ldots,a_{t}$ and $b$ such that
\begin{equation}\label{xxeq}
2^t (-1)^{F\left(\Tr_{1}^{m}\left(\frac{a_{1}}{b}\right),\Tr_{1}^{m}\left(\frac{a_{2}}{b}\right),\ldots,\Tr_{1}^{m}\left(\frac{a_{t}}{b}\right)\right)}
=\sum_{\alpha\in \mathbb{F}_{2}^{t}}W_{F}(\alpha)\xi\left(\sum_{i=1}^{t}s_{i}a_{i},b\right).
\end{equation}

Let  $\omega$ be a primitive element of $\mathbb{F}_{q^2}$, then $a_{1}=w^{i(q-1)}a$ with some integer $i$ and $a\in \mathbb{F}_{q}^{*}$. By substituting $x$ with $w^ix$, then $f_{1}(x)=\Tr_{1}^{2n}\left(\frac{a_{1}}{x^{q-1}+b}\right)$ becomes
 $\Tr_{1}^{2n}\left(\frac{a}{x^{q-1}+\frac{b}{w^{i(q-1)}}}\right)$. Therefore, up to affine equivalence of functions, we can always assume $a_{1}\in \mathbb{F}_{q}^{*}$ in \eqref{form3}.

By preceding Lemma \ref{lem5}, we can now describe our three classes of Dillon-like bent functions $h$ in \eqref{form3} with $F(X_{1},X_{2},\ldots,X_{t})\in \{X_{1}, X_{1}X_{2}, X_{1}X_{2}+X_{1}X_{3}+X_{2}X_{3}\}$. The first theorem is on the case $F_{1}=X_{1}$.
\begin{thm}\label{thm1}
Let $n$ be a positive integer, $q=2^n$ and $a_{1}\in \mathbb{F}_{q}^{*},b\in \mathbb{F}_{q^2}^{*}$. Then $h_{1}(x)=F_{1}(f_{1})={\rm Tr}_{1}^{2n}\left(\frac{a_{1}}{x^{q-1}+b}\right)$ is bent if and only if one of the following conditions is satisfied:

{\rm (1)} $b\in U\setminus\{1\}$, ${\rm Tr}_{1}^{2n}(\frac{a_{1}}{b})=0$;

{\rm (2)} $b \in{\mathbb{F}_{q^2}^{*}}\setminus U$, ${\mathcal{K}}_{n}\left(\frac{a_{1}}{1+b\overline{b}}\right)=0$ and ${\rm Tr}_{1}^{n}\left(\frac{a_{1}(b+\overline{b})}{(1+b\overline{b})b\overline{b}}\right)=0$.
\end{thm}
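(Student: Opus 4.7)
The plan is to invoke Proposition \ref{prop1} together with Lemma \ref{lem5} and then read off the bentness criterion in each of the possible regimes for $b$. Concretely, since $h_1(x) = g(x^{q-1})$ with $g(y) = \mathrm{Tr}_1^{2n}(a_1/(y+b))$, Proposition \ref{prop1} tells us that $h_1$ is bent if and only if
\begin{equation*}
\xi(a_1, b) \;=\; \sum_{\lambda \in U}(-1)^{g(\lambda)} \;=\; (-1)^{g(0)} \;=\; (-1)^{\mathrm{Tr}_1^{2n}(a_1/b)}.
\end{equation*}
Because $a_1 \in \mathbb{F}_q^{*}$, the value of $\xi(a_1,b)$ is governed by Eq.~\eqref{eq3.4} of Lemma \ref{lem5}, and so the proof reduces to checking this single identity across three subcases of $b$.

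First I would handle the boundary case $b = 1$: there $\xi(a_1,1) = 1 + (-1)^{\mathrm{Tr}_1^n(a_1)}q$, whose absolute value is at least $q-1 > 1$ for $n \ge 2$, so the bentness equation has no solution and this case is excluded. Next, for $b \in U \setminus \{1\}$, Eq.~\eqref{eq3.4} gives $\xi(a_1,b) = 1$, so bentness is equivalent to $(-1)^{\mathrm{Tr}_1^{2n}(a_1/b)} = 1$, that is, $\mathrm{Tr}_1^{2n}(a_1/b) = 0$; this produces condition~(1).

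The substantive case is $b \in \mathbb{F}_{q^2}^{*} \setminus U$, where Eq.~\eqref{eq3.4} yields
\begin{equation*}
\xi(a_1, b) \;=\; \left(1 - \mathcal{K}_n\!\left(\tfrac{a_1}{1+b\bar{b}}\right)\right)(-1)^{\mathrm{Tr}_1^n\left(\frac{a_1(b+\bar{b})}{1+b\bar{b}}\right)}.
\end{equation*}
The right-hand side of the bentness equation is $\pm 1$, so I first argue that the factor $1 - \mathcal{K}_n(a_1/(1+b\bar{b}))$ must itself be $\pm 1$; since by Lachaud--Wolfmann every binary Kloosterman value is an integer divisible by $4$, this forces $\mathcal{K}_n(a_1/(1+b\bar{b})) = 0$ (giving the value $+1$). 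With that in hand, the bentness equation reduces to matching the two sign exponents, i.e.
\begin{equation*}
\mathrm{Tr}_1^n\!\left(\tfrac{a_1(b+\bar{b})}{1+b\bar{b}}\right) + \mathrm{Tr}_1^n\!\left(\tfrac{a_1(b+\bar{b})}{b\bar{b}}\right) \;=\; 0,
\end{equation*}
after writing $\mathrm{Tr}_1^{2n}(a_1/b) = \mathrm{Tr}_1^n(a_1(b+\bar{b})/(b\bar{b}))$.

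The only non-routine step is the final algebraic simplification: combining the two fractions inside the trace via $\tfrac{1}{1+b\bar{b}} + \tfrac{1}{b\bar{b}} = \tfrac{1}{(1+b\bar{b})\,b\bar{b}}$ collapses the condition to $\mathrm{Tr}_1^n(a_1(b+\bar{b})/((1+b\bar{b})b\bar{b})) = 0$, which together with $\mathcal{K}_n(a_1/(1+b\bar{b})) = 0$ is exactly condition~(2). The main obstacle, such as it is, is the divisibility-by-$4$ argument for Kloosterman sums: one must be careful to invoke the Lachaud--Wolfmann result rather than an elementary parity count, and to observe that $1+b\bar{b} \neq 0$ precisely because $b \notin U$, so that the Kloosterman argument is well-defined.
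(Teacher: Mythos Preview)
Your proposal is correct and follows essentially the same route as the paper: reduce bentness to $\xi(a_1,b)=(-1)^{\mathrm{Tr}_1^{2n}(a_1/b)}$ via Proposition~\ref{prop1} (the paper reaches this through the $t=1$ instance of Eq.~\eqref{xxeq}), then split into the three regimes $b=1$, $b\in U\setminus\{1\}$, $b\notin U$ using Eq.~\eqref{eq3.4}, and in the last case use the divisibility of $\mathcal{K}_n$ by $4$ to force $\mathcal{K}_n=0$ before collapsing the trace condition via $\tfrac{1}{1+b\bar b}+\tfrac{1}{b\bar b}=\tfrac{1}{(1+b\bar b)b\bar b}$. The arguments and simplifications match the paper's proof line by line.
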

\begin{proof}
By $t=1$ in Eq.\eqref{xxeq}, we know that $h_{1}(x)$ is bent if and only if
\begin{equation}\label{eq3.5}
\xi(a_{1},b)=(-1)^{\Tr_{1}^{2n}(\frac{a_{1}}{b})},
\end{equation}
due to $W_{F_{1}}(0)=0$ and $W_{F_{1}}(1)=2$. If $b=1$, we know $\xi(a_{1},1)=1+(-1)^{\Tr_{1}^{n}(a)}q$ from Eq.\eqref{eq3.4}, then $h_{1}(x)$ is not bent.
If $b\in U\setminus\{1\}$, then $\xi(a_{1},b)=1$ and Eq.\eqref{eq3.5} holds if and only if $\Tr_{1}^{2n}(\frac{a_{1}}{b})=0$.
When $b\notin U$, we have $\xi(a_{1},b)=\left(1-\mathcal{K}_{n}\left(\frac{a_{1}\overline{a_{1}}}{1+b^2\overline{b}^2}\right)\right)(-1)^{\Tr_{1}^{2n}\left(\frac{\overline{a_{1}}b}{1+b\overline{b}}\right)}$ again by Lemma \ref{lem5}. Then Eq.\eqref{eq3.5} becomes
$$1-\mathcal{K}_{n}\left(\frac{a_{1}}{1+b\overline{b}}\right)=(-1)^{\Tr_{1}^{n}\left(\frac{a_{1}(b+\overline{b})}{1+b\overline{b}}\right)+\Tr_{1}^{2n}(\frac{a_{1}}{b})}
=(-1)^{\Tr_{1}^{n}\left(\frac{a_{1}(b+\overline{b})}{(1+b\overline{b})b\overline{b}}\right)}.$$
By the fact that the value of Kloosterman sum is a multiple of $4$, we get that
Eq.\eqref{eq3.5} holds if and only if $\mathcal{K}_{n}\left(\frac{a_{1}}{1+b\overline{b}}\right)=0$ and $\Tr_{1}^{n}\left(\frac{a_{1}(b+\overline{b})}{(1+b\overline{b})b\overline{b}}\right)=0$.

The proof is completed.
\end{proof}

\begin{example}
Let $q=2^3$ and $w\in \mathbb{F}_{q^2}$ be a primitive element with minimal polynomial $m(x)=x^6+x^4+x^3+x+1\in \mathbb{F}_{2}[x]$. Choose $a_{1}=1$, $b_{1}=w^7$ and $b_{2}=w$, it can be verified that $(a_{1},b_{1})$ satisfies condition (1) and $(a_{1},b_{2})$ satisfies (2), then both functions $\Tr_{1}^{6}(\frac{1}{x^{7}+w^7})$ and $\Tr_{1}^{6}(\frac{1}{x^{7}+w})$ are bent.
\end{example}


In what follows we characterize the bentness of $h$ for $F_{2}=X_{1}X_{2}$.
\begin{thm}{\label{thmh1}}
Let $q=2^n$ with positive integer $n\ge6$, $a_{1}\in \mathbb{F}_{q}^{*}$, $a_{2},b\in \mathbb{F}_{q^2}^{*}$ and $a_{1}\ne a_{2}$. Then $h_{2}(x)=F_{2}(f_{1},f_{2})=f_{1}f_{2}$ is bent if and only if one of the following conditions is satisfied:

{\rm(1)} $b=1$, $\Tr_{1}^{n}(a_{1})=1$ and $a_{2}\in \mathbb{F}_{q^2}^{*}\setminus\mathbb{F}_{q}^{*}$;

{\rm(2)} $b\in U\setminus\{1\}$, then $\Tr_{n}^{2n}(a_{2}\overline{b})=0$, $\Tr_{1}^{n}(a_{2}\overline{a_{2}})=1$ and $h_{2}(0)=0$;

{\rm(3)} $b\in U\setminus\{1\}$, $\Tr_{n}^{2n}((a_{1}+a_{2})\overline{b})=0$, $\Tr_{1}^{n}((a_{1}+a_{2})(a_{1}+\overline{a_{2}}))=0$ and $h_{2}(0)=0$.
\end{thm}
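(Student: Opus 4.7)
The plan is to apply the general criterion Eq.\eqref{xxeq} with $t = 2$ and $F_2(X_1, X_2) = X_1 X_2$. A direct computation of the Walsh spectrum gives $W_{F_2}(0,0) = W_{F_2}(1,0) = W_{F_2}(0,1) = 2$ and $W_{F_2}(1,1) = -2$; combined with $\xi(0, b) = q + 1$, the bentness of $h_2$ is therefore equivalent to the single identity
\[
\xi(a_1, b) + \xi(a_2, b) - \xi(a_1 + a_2, b) = 2(-1)^{h_2(0)} - q - 1,
\]
where $h_2(0) = \Tr_{1}^{2n}(a_1/b) \cdot \Tr_{1}^{2n}(a_2/b)$. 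The whole proof reduces to analyzing this identity through Lemma \ref{lem5}, split according to the location of $b$: namely $b = 1$, $b \in U \setminus \{1\}$, and $b \in \mathbb{F}_{q^2}^{*} \setminus U$.

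When $b = 1$, since $a_1 \in \mathbb{F}_{q}^{*}$ we have $\Tr_{1}^{2n}(a_1) = 0$, hence $h_2(0) = 0$ and the right-hand side equals $1 - q$. Lemma \ref{lem5} yields $\xi(a_1, 1) = 1 + (-1)^{\Tr_{1}^{n}(a_1)} q$, while $\xi(a_2, 1)$ and $\xi(a_1 + a_2, 1)$ depend on whether $a_2 \in \mathbb{F}_q$. If $a_2 \in \mathbb{F}_{q}^{*}$, each of the three summands contributes a $\pm q$ term, and a short enumeration of the four sign patterns shows that none solves the equation. If $a_2 \notin \mathbb{F}_{q}$, then $\xi(a_2, 1) = \xi(a_1 + a_2, 1) = 1$, and the equation reduces to $\Tr_{1}^{n}(a_1) = 1$, which is exactly condition (1).

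For $b \in U \setminus \{1\}$, since $a_1 \in \mathbb{F}_{q}^{*}$ and $b + \overline{b} \neq 0$, we have $\Tr_{n}^{2n}(a_1 \overline{b}) = a_1(b + \overline{b}) \neq 0$, hence $\xi(a_1, b) = 1$ by Lemma \ref{lem5}. A key observation is that $\Tr_{n}^{2n}(a_2 \overline{b}) = 0$ and $\Tr_{n}^{2n}((a_1 + a_2) \overline{b}) = 0$ cannot hold simultaneously, since their sum is $\Tr_{n}^{2n}(a_1 \overline{b}) \neq 0$. This leaves three subcases. In the generic subcase where both are nonzero, the left-hand side collapses to $1$, which is incompatible with the right-hand side lying in $\{1 - q,\, -3 - q\}$. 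In the subcase $\Tr_{n}^{2n}(a_2 \overline{b}) = 0$ and $\Tr_{n}^{2n}((a_1 + a_2) \overline{b}) \neq 0$, the left-hand side equals $1 + (-1)^{\Tr_{1}^{n}(a_2 \overline{a_2})} q$, so matching forces $\Tr_{1}^{n}(a_2 \overline{a_2}) = 1$ together with $h_2(0) = 0$, giving condition (2). The mirror subcase yields condition (3) with $a_1 + a_2$ playing the role of $a_2$, noting that $\overline{a_1 + a_2} = a_1 + \overline{a_2}$.

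The remaining case $b \in \mathbb{F}_{q^2}^{*} \setminus U$ is excluded by a size estimate that is the reason for the hypothesis $n \geq 6$. The third branch of Lemma \ref{lem5} combined with the Weil-type bound $|\mathcal{K}_n(c)| \leq 2^{n/2+1}$ gives $|\xi(a, b)| \leq 1 + 2^{n/2+1}$ for every $a \in \mathbb{F}_{q^2}^*$. Hence the left-hand side of the bentness identity has absolute value at most $3 + 6 \cdot 2^{n/2}$, while the right-hand side has absolute value at least $q - 1 = 2^n - 1$. A short estimate shows $3 + 6 \cdot 2^{n/2} < 2^n - 1$ for $n \geq 6$, a contradiction. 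The main obstacle is not any individual computation but the organization of the subcases for $b \in U \setminus \{1\}$ and the verification that conditions (1), (2) and (3) are pairwise disjoint; once the case split is fixed, each subcase is a direct application of Lemma \ref{lem5}.
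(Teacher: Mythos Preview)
Your proof is correct and follows essentially the same route as the paper: the same Walsh-transform reduction to the identity $q+1+\xi(a_1,b)+\xi(a_2,b)-\xi(a_1+a_2,b)=2(-1)^{h_2(0)}$, the same case split on $b\in\{1\}$, $b\in U\setminus\{1\}$, $b\notin U$, and the same use of Lemma~\ref{lem5} in each branch. The only cosmetic difference is that the paper keeps $q+1$ on the left and bounds $\Theta$ from below to exclude $b\notin U$, whereas you move $q+1$ to the right and use a two-sided absolute-value estimate; both arguments succeed for $n\ge 6$.
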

\begin{proof}
For $\alpha=(s_{1},s_{2})\in \mathbb{F}_{2}^{2}$, it is clear that
$W_{F_{2}}(s_{1},s_{2})=2(-1)^{s_{1}s_{2}}$.
Then Eq.\eqref{xxeq} becomes
\begin{equation}\label{eq3.9}
q+1+\xi(a_{1},b)+\xi(a_{2},b)-\xi(a_{1}+a_{2},b)=2(-1)^{h_{2}(0)},
\end{equation}
where $\xi(\cdot,b)$ is as defined in Eq.\eqref{eq3.1}. For convenience, we denote the left-hand side of Eq.\eqref{eq3.9} by $\Theta$. The proof is also proceeded according to values of $b$.

Case $b=1$. By Eq.\eqref{eq3.2}, we have
\begin{equation}\label{eq3.10}
\xi(a,1)=\left\{\begin{array}{ll}
1+(-1)^{\Tr_{1}^{n}(a)}q, & {\rm if}\,a\in\mathbb{F}_{q},\\
1, & {\rm if}\,a\notin \mathbb{F}_{q}.
\end{array}\right.
\end{equation}
Note that $a_{1}\in \mathbb{F}_{q}$, then $a_{2}\in \mathbb{F}_{q}$ if and only if $a_{1}+a_{2}\in \mathbb{F}_{q}$. If $a_{2}\notin \mathbb{F}_{q}$, then
$\xi(a_{2},1)=\xi(a_{1}+a_{2},1)=1$ and then
\begin{equation*}
   \Theta=q+1+1+(-1)^{\Tr_{1}^{n}(a_{1})}q+1-1
   = 2+q(1+(-1)^{\Tr_{1}^{n}(a_{1})}).
\end{equation*}
Since $h_{2}(0)=\Tr_{1}^{2n}(a_{1})\Tr_{1}^{2n}(a_{2})=0$, we deduce that $h_{2}$ is bent, i.e., Eq.\eqref{eq3.9} holds, if and only if $\Tr_{1}^{n}(a_{1})=1$. If $a_{2}\in \mathbb{F}_{q}$, then $\xi(a_{2},1)=1+(-1)^{\Tr_{1}^{n}(a_{2})}q$, $\xi(a_{1}+a_{2},1)=1+(-1)^{\Tr_{1}^{n}(a_{1}+a_{2})}q$ and
\begin{eqnarray*}
  \Theta&=&q+1+1+(-1)^{\Tr_{1}^{n}(a_{1})}q+1+(-1)^{\Tr_{1}^{n}(a_{2})}q-1-(-1)^{\Tr_{1}^{n}(a_{1}+a_{2})}q  \\
   &=& 2+q(1+(-1)^{\Tr_{1}^{n}(a_{1})}+(-1)^{\Tr_{1}^{n}(a_{2})}-(-1)^{\Tr_{1}^{n}(a_{1}+a_{2})})\\
   &=& 2+2q(-1)^{\Tr_{1}^{n}(a_{1})\Tr_{1}^{n}(a_{2})}.
\end{eqnarray*}
Thus Eq.\eqref{eq3.9} does not hold, which implies that $h_{2}$ is not bent if $a_{2}\in \mathbb{F}_{q}$. Then condition (1) follows.

Case $b\in U\setminus\{1\}$. Since
\begin{equation}\label{eq3.11}
\xi(a,b)=\left\{\begin{array}{ll}
1+(-1)^{\Tr_{1}^{n}(a\overline{a})}q, & {\rm if}\,\Tr_{n}^{2n}(a\overline{b})=0,\\
1, & {\rm if}\,\Tr_{n}^{2n}(a\overline{b})\ne 0,
\end{array}\right.
\end{equation}
we firstly have $\xi(a_{1},b)=1$ due to $a_{1}\in \mathbb{F}_{q}$ and $\Tr_{n}^{2n}(a_{1}\overline{b})=a_{1}(b+\overline{b})\ne0$. If $\Tr_{n}^{2n}(a_{2}\overline{b})\ne 0$ and $\Tr_{n}^{2n}((a_{1}+a_{2})\overline{b})\ne 0$ hold at the same time, then
$\Theta=q+1+1+1-1=q+2$, which implies that $h_{2}(x)$ is not bent. If $\Tr_{n}^{2n}(a_{2}\overline{b})=0$ and $\Tr_{n}^{2n}((a_{1}+a_{2})\overline{b})=0$, then $\Tr_{n}^{2n}(a_{1}\overline{b})=a_{1}(b+\overline{b})=0$. This contradicts to the assumption $b\in U\setminus\{1\}$. Thus, if $\Tr_{n}^{2n}(a_{2}\overline{b})=0$, then $\Tr_{n}^{2n}((a_{1}+a_{2})\overline{b})\ne0$, and then
\begin{eqnarray*}
   \Theta&=& q+1+1+1+(-1)^{\Tr_{1}^{n}(a_{2}\overline{a_{2}})}q-1 \\
   &=&2+q(1+(-1)^{\Tr_{1}^{n}(a_{2}\overline{a_{2}})}).
\end{eqnarray*}
Therefore, Eq.\eqref{eq3.9} holds if and only if $\Tr_{1}(a_{2}\overline{a_{2}})=1$ and $h_{2}(0)=0$. On the other hand, if $\Tr_{n}^{2n}((a_{1}+a_{2})\overline{b})=0$, then $\Tr_{n}^{2n}(a_{2}\overline{b})=a_{1}(b+\overline{b})\ne0$. From
$$\Theta=2+q(1-(-1)^{\Tr_{1}^{n}((a_{1}+a_{2})(a_{1}+\overline{a_{2}}))}),$$
we have that $h_{2}(x)$ is bent if and only if $\Tr_{n}^{2n}((a_{1}+a_{2})\overline{b})=0$, $\Tr_{1}^{n}\left((a_{1}+a_{2})(a_{1}+\overline{a_{2}})\right)=0$ and $h_{2}(0)=0$. Thus we obtain conditions (2) and (3).

In the next, we prove that if $b\notin U$, then $h_{2}$ is not bent. For all $a\in \mathbb{F}_{q^2}^{*}$, we have
$$\xi(a,b)=\left(1-\mathcal{K}_{n}\left(\frac{a\overline{a}}{1+b^2\overline{b}^2}\right)\right)(-1)^{\Tr_{1}^{2n}\left(\frac{\overline{a}b}{1+b\overline{b}}\right)},$$
and then \begin{eqnarray*}
        \Theta &=& q+1+\left(1-\mathcal{K}_{n}\left(\frac{a_{1}}{1+b\overline{b}}\right)\right)(-1)^{\Tr_{1}^{2n}\left(\frac{a_{1}b}{1+b\overline{b}}\right)} \\
         &&+\left(1-\mathcal{K}_{n}\left(\frac{a_{2}\overline{a_{2}}}{1+b^2\overline{b}^2}\right)\right)(-1)^{\Tr_{1}^{2n}\left(\frac{\overline{a_{2}}b}{1+b\overline{b}}\right)}  \\
         && -\left(1-\mathcal{K}_{n}\left(\frac{(a_{1}+a_{2})(a_{1}+\overline{a_{2}})}{1+b^2\overline{b}^2}\right)\right)(-1)^{\Tr_{1}^{2n}\left(\frac{(a_{1}+\overline{a_{2}})b}{1+b\overline{b}}\right)}\\ &=&q+1+(-1)^{\Tr_{1}^{2n}\left(\frac{a_{1}b}{1+b\overline{b}}\right)}+(-1)^{\Tr_{1}^{2n}\left(\frac{\overline{a_{2}}b}{1+b\overline{b}}\right)}
         -(-1)^{\Tr_{1}^{2n}\left(\frac{(a_{1}+\overline{a_{2}})b}{1+b\overline{b}}\right)} \\
         &&-\mathcal{K}_{n}\left(\frac{a_{1}}{1+b\overline{b}}\right)(-1)^{\Tr_{1}^{2n}\left(\frac{a_{1}b}{1+b\overline{b}}\right)}
         -\mathcal{K}_{n}\left(\frac{a_{2}\overline{a_{2}}}{1+b^2\overline{b}^2}\right)(-1)^{\Tr_{1}^{2n}\left(\frac{\overline{a_{2}}b}{1+b\overline{b}}\right)}\\
         &&+\mathcal{K}_{n}\left(\frac{(a_{1}+a_{2})(a_{1}+\overline{a_{2}})}{1+b^2\overline{b}^2}\right)(-1)^{\Tr_{1}^{2n}\left(\frac{(a_{1}+\overline{a_{2}})b}{1+b\overline{b}}\right)}.
      \end{eqnarray*}
Combining the facts
$$1+(-1)^{\Tr_{1}^{2n}\left(\frac{a_{1}b}{1+b\overline{b}}\right)}+(-1)^{\Tr_{1}^{2n}\left(\frac{\overline{a_{2}}b}{1+b\overline{b}}\right)}
         -(-1)^{\Tr_{1}^{2n}\left(\frac{(a_{1}+\overline{a_{2}})b}{1+b\overline{b}}\right)}
         =2(-1)^{\Tr_{1}^{2n}\left(\frac{a_{1}b}{1+b\overline{b}}\right)\Tr_{1}^{2n}\left(\frac{\overline{a_{2}}b}{1+b\overline{b}}\right)}$$ and $|\mathcal{K}_{n}(a)|\le 2^{\frac{n}{2}+1}+1$,
we have
\begin{eqnarray*}
\Theta &>&q-2-\mathcal{K}_{n}\left(\frac{a_{1}}{1+b\overline{b}}\right)(-1)^{\Tr_{1}^{2n}\left(\frac{a_{1}b}{1+b\overline{b}}\right)}
         -\mathcal{K}_{n}\left(\frac{a_{2}\overline{a_{2}}}{1+b^2\overline{b}^2}\right)(-1)^{\Tr_{1}^{2n}\left(\frac{\overline{a_{2}}b}{1+b\overline{b}}\right)}\\
         &&+\mathcal{K}_{n}\left(\frac{(a_{1}+a_{2})(a_{1}+\overline{a_{2}})}{1+b^2\overline{b}^2}\right)(-1)^{\Tr_{1}^{2n}\left(\frac{(a_{1}+\overline{a_{2}})b}{1+b\overline{b}}\right)}\\
         &&>q-2-3(2^{\frac{n}{2}+1}+1)=2^{\frac{n}{2}}(2^{\frac{n}{2}}-6)-5>2
\end{eqnarray*}
if $n\ge 6$. Thus, $\Theta=\pm 2$ can not hold and then $h_{2}(x)$ is not bent if $n\ge 6$ for $b\in \mathbb{F}_{q^2}\setminus U$.

This finishes the proof.
\end{proof}

\begin{example}
Let $q=2^6$ and $w\in \mathbb{F}_{q^2}$ be a primitive element with minimal polynomial $m(x)=x^{12} + x^7 + x^6 + x^5 + x^3 + x + 1\in \mathbb{F}_{2}[x]$. Letting $\textsf{A}=\{(w^{195},w,1), (1,w^{258},w^{63}), (1,w^{60}, w^{63})\}$, we have that the three triples $(a_{1},a_{2},b)$ in $\textsf{A}$ satisfy conditions (1), (2) and (3), respectively, then $\Tr_{1}^{12}(\frac{a_{1}}{x^{63}+b})\Tr_{1}^{12}(\frac{a_{2}}{x^{63}+b})$ is bent for each triple $(a_{1},a_{2},b)\in \textsf{A}$.
\end{example}

 The bentness of the multiplication in Theorem \ref{thmh1} has a high interest as far as we know; it is not trivial to derive bent functions by directly multiplying two Boolean functions with the same forms. The above theorem presents such a construction.

Now we describe the third class of bent functions with $F_{3}=X_{1}X_{2}+X_{1}X_{3}+X_{2}X_{3}$.

\begin{thm}\label{thm3}
Let $q=2^n$ with $n>2$, $a_{1}\in \mathbb{F}_{q}^{*}$, $a_{2},a_{3}\in \mathbb{F}_{q^2}^{*}$ be pairwise different and $a_{1}+a_{2}+a_{3}\ne0$, and $F_{3}=X_{1}X_{2}+X_{1}X_{3}+X_{2}X_{3}$. Then $h_{3}(x)=F_{3}(f_{1},f_{2},f_{3})$ is bent if one of  the following conditions is satisfied:

\begin{itemize}
\item[-] $b=1$,

{\rm(1)} $a_{2}\in \mathbb{F}_{q}^{*}$, $a_{3}\notin \mathbb{F}_{q}$, $\Tr_{1}^{n}(a_{1}+a_{2})=1$;


{\rm(2)} $a_{2}\notin \mathbb{F}_{q}$, $a_{3}\notin \mathbb{F}_{q}$, $a_{2}+a_{3}\in \mathbb{F}_{q}$, $\Tr_{1}^{n}(a_{2}+a_{3})=0$ and $\Tr_{1}^{2n}(a_{2})=0$;

\item[-] $b\in U\setminus\{1\}$,

{\rm(3)} $\Tr_{n}^{2n}(a_{2}\overline{b})=\Tr_{n}^{2n}(a_{3}\overline{b})=0$, $\Tr_{1}^{n}(a_{2}\overline{a_{2}}+a_{3}\overline{a_{3}})=1$ and $h_{3}(0)=0$;

{\rm(4)} $\Tr_{n}^{2n}(a_{2}\overline{b})=0$ and $\Tr_{n}^{2n}(a_{3}\overline{b})\ne 0$, $\Tr_{n}^{2n}((a_{1}+a_{2}+a_{3})\overline{b})=0$, $\Tr_{1}^{n}(a_{2}\overline{a_{2}})=\Tr_{1}^{n}((a_{1}+a_{2}+a_{3})(a_{1}+\overline{a_{2}}+\overline{a_{3}}))$ and $h_{3}(0)=0$;


{\rm(5)} $\Tr_{n}^{2n}(a_{2}\overline{b})\ne 0$, $\Tr_{n}^{2n}(a_{3}\overline{b})\ne 0$, $\Tr_{n}^{2n}((a_{1}+a_{2}+a_{3})\overline{b})\ne0$ and $h_{3}(0)=0$;

\item[-] $b\notin U$,

{\rm(6)} $2(-1)^{h_{3}(0)}=2(-1)^{u(a_{1})u(a_{2})+u(a_{1})u(a_{3})+u(a_{2})u(a_{3})}+(-1)^{u(a_{1}+a_{2}+a_{3})}k(a_{1}+a_{2}+a_{3})-\sum_{i=1}^{3}(-1)^{u(a_{i})}k(a_{i})$, where
$k(x)=\mathcal{K}_{n}\left(\frac{x\overline{x}}{1+b^2\overline{b}^2}\right)$ and $u(x)=\Tr_{1}^{2n}\left(\frac{\overline{x}b}{1+b\overline{b}}\right)$.
\end{itemize}
\end{thm}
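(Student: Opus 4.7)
The plan is to apply the general criterion \eqref{xxeq} with $t=3$ and $F=F_3$, using the Walsh transform of the $3$-bit majority function $F_3(X_1,X_2,X_3)=X_1X_2+X_1X_3+X_2X_3$. A direct computation yields $W_{F_3}(0,0,0)=0$, $W_{F_3}(s)=4$ when $\mathrm{wt}(s)=1$, $W_{F_3}(s)=0$ when $\mathrm{wt}(s)=2$, and $W_{F_3}(1,1,1)=-4$. Substituting into \eqref{xxeq} and dividing by $4$, the bentness of $h_3$ is equivalent to
\[
2(-1)^{h_3(0)} \;=\; \xi(a_1,b)+\xi(a_2,b)+\xi(a_3,b)-\xi(a_1+a_2+a_3,b),
\]
with $\xi(\cdot,b)$ the sum from \eqref{eq3.1}. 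I then split into three cases following the evaluation regimes of Lemma \ref{lem5}: $b=1$, $b\in U\setminus\{1\}$, and $b\notin U$.

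For $b=1$, I insert the two-branch value of $\xi(a,1)$ from \eqref{eq3.10} and subdivide according to whether each of $a_2,a_3$ lies in $\mathbb{F}_q$. Since $a_1\in\mathbb{F}_q^\ast$, the sum $a_1+a_2+a_3$ is in $\mathbb{F}_q$ exactly when $a_2+a_3$ is, which collapses the analysis to four subcases. In each, the large $q$-contributions must cancel to leave the target value $\pm 2$; working this out together with $h_3(0)$ isolates precisely scenarios (1) and (2), and excludes the others. The treatment parallels the $b=1$ analysis in the proof of Theorem \ref{thmh1}, but now four $\xi$-terms interact.

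For $b\in U\setminus\{1\}$, I use the dichotomy \eqref{eq3.11} and split according to which of $\Tr_n^{2n}(a_i\overline{b})$, $i\in\{2,3\}$, and $\Tr_n^{2n}((a_1+a_2+a_3)\overline{b})$ vanish, noting that $\Tr_n^{2n}(a_1\overline{b})=a_1(b+\overline{b})\ne 0$ since $b\ne 1$ and $a_1\in\mathbb{F}_q^\ast$. Several configurations are automatically excluded because joint vanishing would force $\Tr_n^{2n}(a_1\overline{b})=0$. The surviving subcases, after requiring the $q$-coefficients to cancel and forcing the remaining sign to equal $(-1)^{h_3(0)}$, reproduce exactly conditions (3), (4), and (5). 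For $b\notin U$, every $\xi$ appearing in the identity is a Kloosterman product from the last branch of Lemma \ref{lem5}, no further algebraic simplification is available, and the resulting identity is precisely condition (6) after introducing $k(x)$ and $u(x)$.

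The main obstacle will be the bookkeeping for $b\in U\setminus\{1\}$: besides ruling out the empty subcases, one must check in each surviving configuration that the $q$-coefficients collapse exactly under the stated trace hypotheses, e.g.\ the matching $\Tr_1^n(a_2\overline{a_2})=\Tr_1^n\bigl((a_1+a_2+a_3)(a_1+\overline{a_2}+\overline{a_3})\bigr)$ in condition (4), and that subcase (5), in which none of the three distinguished traces vanish, reduces to the pure condition $h_3(0)=0$. The $b=1$ case is a slightly lighter version of the same kind of bookkeeping, and the $b\notin U$ case is mechanical once Lemma \ref{lem5} is invoked.
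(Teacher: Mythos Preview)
Your proposal is correct and follows essentially the same approach as the paper: reduce \eqref{xxeq} via the Walsh spectrum of $F_3$ to the identity $\xi(a_1,b)+\xi(a_2,b)+\xi(a_3,b)-\xi(a_1+a_2+a_3,b)=2(-1)^{h_3(0)}$, then evaluate each $\xi$ with Lemma~\ref{lem5} and perform the same case split on $b$ ($b=1$, $b\in U\setminus\{1\}$, $b\notin U$) with the same subcases. The only cosmetic difference is your weight-based description of $W_{F_3}$ versus the paper's parity formula $W_{F_3}(s_1,s_2,s_3)=4(-1)^{s_2s_3}$ for $s_1+s_2+s_3=1$ and $0$ otherwise, which are equivalent.
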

\begin{proof}
The proof is similar to Theorem \ref{thmh1}, while it is a bit more complicated because there are more choices for parameters
$a_{1}$, $a_{2}$ and $a_{3}$. For $\alpha=(s_{1},s_{2},s_{3})\in \mathbb{F}_{2}^{3}$, we easily have
$$W_{F_{3}}(s_{1},s_{2},s_{3})=\left\{\begin{array}{cc}
                                       0,  & {\rm if}\,s_{1}+s_{2}+s_{3}=0, \\
                                       4(-1)^{s_{2}s_{3}},  &  {\rm if}\,s_{1}+s_{2}+s_{3}=1,
                                      \end{array}
\right.$$
which reduces Eq.\eqref{xxeq} to
\begin{equation}\label{eq3.16}
\xi(a_{1},b)+\xi(a_{2},b)+\xi(a_{3},b)-\xi(a_{1}+a_{2}+a_{3},b)=2(-1)^{h_{3}(0)}.
\end{equation}
 We also denote the left-hand side of Eq.\eqref{eq3.16} by $\Theta$ and proceed with the proof according to values of $b$.

Case $b=1$. If both $a_{2}$ and $a_{3}$ belong to $\mathbb{F}_{q}^{*}$, then Eq.\eqref{eq3.16} and Lemma \ref{lem5} give
 \begin{eqnarray*}
  \Theta &=&2+(-1)^{\Tr_{1}^{n}(a_{1})}q+(-1)^{\Tr_{1}^{n}(a_{2})}q+(-1)^{\Tr_{1}^{n}(a_{3})}q-(-1)^{\Tr_{1}^{n}(a_{1}+a_{2}+a_{3})}q  \\
    &=&2+2(-1)^{\Tr_{1}^{n}(a_{1})\Tr_{1}^{n}(a_{2})+\Tr_{1}^{n}(a_{1})\Tr_{1}^{n}(a_{3})+\Tr_{1}^{n}(a_{2})\Tr_{1}^{n}(a_{3})}q.
 \end{eqnarray*}
Hence, Eq.\eqref{eq3.16} does not hold and $h_{3}(x)$ is not bent. We assume that $a_{2}\in \mathbb{F}_{q}^{*}$ and $a_{3}\notin \mathbb{F}_{q}$. Recalling  that $a_{1}\in \mathbb{F}_{q}$, we have $a_{1}+a_{2}+a_{3}\notin \mathbb{F}_{q}$ and then
 \begin{eqnarray*}
   \Theta &=& 1+(-1)^{\Tr_{1}^{n}(a_{1})}q+1+(-1)^{\Tr_{1}^{n}(a_{2})}q+1-1 \\
    &=&  2+q\left((-1)^{\Tr_{1}^{n}(a_{1})}+(-1)^{\Tr_{1}^{n}(a_{2})}\right).
 \end{eqnarray*}
Note that $h_{3}(0)=0$. Therefore, $h_{3}(x)$ is bent if and only if $\Tr_{1}^{n}(a_{1}+a_{2})=1$. This implies condition (1). Next, we assume that neither $a_{2}$ nor $a_{3}$ belongs to $\mathbb{F}_{q}$. From
 \begin{eqnarray*}
   \Theta &=&1+(-1)^{\Tr_{1}^{n}(a_{1})}q+2-\xi(a_{1}+a_{2}+a_{3},1),
 \end{eqnarray*}
 if $h_{3}(x)$ is bent, we necessarily have $a_{1}+a_{2}+a_{3}\in \mathbb{F}_{q}^{*}$, and then $a_{2}+a_{3}\in \mathbb{F}_{q}^{*}$. It follows that $\Theta=2+q(-1)^{\Tr_{1}^{n}(a_{1})}(1-(-1)^{\Tr_{1}^{n}(a_{2}+a_{3})})$. Thus, we need $\Tr_{1}^{n}(a_{2}+a_{3})=0$, to assure the bentness of $h_{3}(x)$. Note that
 \begin{eqnarray*}
 h_{3}(0)&=&\Tr_{1}^{2n}(a_{1})\Tr_{1}^{2n}(a_{2}+a_{3})+\Tr_{1}^{2n}(a_{2})\Tr_{1}^{2n}(a_{3})\\
 &=&\Tr_{1}^{2n}(a_{2})\Tr_{1}^{2n}(a_{3})=\Tr_{1}^{2n}(a_{2})\Tr_{1}^{2n}(a_{2}+a_{2}+a_{3})\\
 &=&\Tr_{1}^{2n}(a_{2}),
 \end{eqnarray*}
due to $\Tr_{1}^{2n}(a_{2}+a_{3})=0$. Then we obtain condition (2), i.e., $h_{3}(x)$ is bent if and only if $a_{2}+a_{3}\in \mathbb{F}_{q}$, $\Tr_{1}^{n}(a_{2}+a_{3})=0$ and $\Tr_{1}^{2n}(a_{2})=0$.

Case $b\in U\setminus\{1\}$. From Eq.\eqref{eq3.4}, we know $\xi(a_{1},b)=1$. Concerning to $\xi(a_{2},b)$ and $\xi(a_{3},b)$, by Eq.\eqref{eq3.2}, we have the following subcases.
\begin{itemize}
\item Subcase {a}: $\Tr_{n}^{2n}(a_{2}\overline{b})=\Tr_{n}^{2n}(a_{3}\overline{b})=0$. Obviously $a_{2}\notin \mathbb{F}_{q}$, $a_{3}\notin \mathbb{F}_{q}$. Then $\Tr_{n}^{2n}((a_{1}+a_{2}+a_{3})\overline{b})=a_{1}(b+\overline{b})\ne0$ and
\begin{eqnarray*}
  \Theta &=& 1+1+(-1)^{\Tr_{1}^{n}(a_{2}\overline{a_{2}})}q+1+(-1)^{\Tr_{1}^{n}(a_{3}\overline{a_{3}})}q-1\\
  &=& 2+q\left((-1)^{\Tr_{1}^{n}(a_{2}\overline{a_{2}})}+(-1)^{\Tr_{1}^{n}(a_{3}\overline{a_{3}})}\right).
\end{eqnarray*}
It is clear that $h_{3}(x)$ is bent if and only if $\Tr_{1}^{n}(a_{2}\overline{a_{2}}+a_{3}\overline{a_{3}})=1$ and $h_{3}(0)=0$, which is corresponding to condition (3).

\item Subcase {b}: There is exactly one nonzero element in $\{\Tr_{n}^{2n}(a_{2}\overline{b}),\Tr_{n}^{2n}(a_{3}\overline{b})\}$. Without of loss of generality, we assume that $\Tr_{n}^{2n}(a_{2}\overline{b})=0$ and $\Tr_{n}^{2n}(a_{3}\overline{b})\ne 0$. If $\Tr_{n}^{2n}((a_{1}+a_{2}+a_{3})\overline{b})\ne 0$, then
$\Theta=2+(-1)^{\Tr_{1}^{n}(a_{2}\overline{a_{2}})}q$, and then $h_{3}(x)$ is not bent. If $\Tr_{n}^{2n}((a_{1}+a_{2}+a_{3})\overline{b})=0$, from
$$\Theta=2+q\left((-1)^{\Tr_{1}^{n}(a_{2}\overline{a_{2}})}-(-1)^{\Tr_{1}^{n}((a_{1}+a_{2}+a_{3})(a_{1}+\overline{a_{2}}+\overline{a_{3}}))}\right),$$
we deduce that $h_{3}(x)$ is bent if and only if $\Tr_{1}^{n}(a_{2}\overline{a_{2}})=\Tr_{1}^{n}((a_{1}+a_{2}+a_{3})(a_{1}+\overline{a_{2}}+\overline{a_{3}}))$ and $h_{3}(0)=0$. This gives condition (4).

\item Subcase {c}: $\Tr_{n}^{2n}(a_{2}\overline{b})\ne0$, $\Tr_{n}^{2n}(a_{3}\overline{b})\ne0$. Then we get condition (5), i.e., $h_{3}(x)$ is bent if and only if $\Tr_{n}^{2n}((a_{1}+a_{2}+a_{3})\overline{b})\ne0$ and $h_{3}(0)=0$.
\end{itemize}
 Case $b\notin U$. By Eq.\eqref{eq3.2},
$$\xi(a_{i},b)=\left(1-\mathcal{K}_{n}\left(\frac{a_{i}\overline{a_{i}}}{1+b^2\overline{b}^2}\right)\right)(-1)^{\Tr_{1}^{2n}\left(\frac{\overline{a_{i}}b}{1+b\overline{b}}\right)}$$
for $i=1,2,3$, then $h_{3}(x)$ is bent if and only if
\begin{eqnarray*}
2(-1)^{h_{3}(0)}&=&\sum_{i=1}^{3}(-1)^{u(a_{i})}(1-k(a_{i}))-(-1)^{u(a_{1}+a_{2}+a_{3})}(1-k(a_{1}+a_{2}+a_{3}))\\
&=& \sum_{i=1}^{3}(-1)^{u(a_{i})}-(-1)^{u(a_{1})+u(a_{2})+u(a_{3})}\\
&&+(-1)^{u(a_{1})+u(a_{2})+u(a_{3})}k(a_{1}+a_{2}+a_{3})-\sum_{i=1}^{3}(-1)^{u(a_{i})}k(a_{i})\\
&=&2(-1)^{u(a_{1})u(a_{2})+u(a_{1})u(a_{3})+u(a_{2})u(a_{3})}\\
&&+(-1)^{u(a_{1})+u(a_{2})+u(a_{3})}k(a_{1}+a_{2}+a_{3})-\sum_{i=1}^{3}(-1)^{u(a_{i})}k(a_{i}).
\end{eqnarray*}
The proof is therefore completed.
\end{proof}
\begin{example}
Let $q=2^3$ and $w\in \mathbb{F}_{q^2}$ be the same as in Example 1. Choose $\textsf{A}=\{(w^{27},w^9,w,1),\\ (w^{27},w,w^5,1), (w^{27}, w^{16}, w^{34}, w^7),(w^{27}, w^{16}, w^{11}, w^7),(w^{27}, w, w^{3}, w^{7}),(w^{27}, w^{3}, w^9, w)\}$, then the quadruples $(a_{1},a_{2},a_{3},b)$ in $\textsf{A}$ satisfy the conditions (1)-(6), respectively. It can be verified that $h_{3}$ is bent for each quadruple in $\textsf{A}$.
\end{example}

\begin{remark}
We divided the conditions into three cases in Theorem \ref{thm3}. The characterizations on $a_{1},a_{2},a_{3}$ and $b$ such that $h_{3}(x)$ is bent in cases $b=1$ and $b\in U\setminus\{1\}$ are clear. In the case $b\notin U$, we give a relation on these four Kloosterman sums, which is far from clear and not satisfying. If we know more about these four sums, we might give a more clear characterization of the coefficients $a_{1},a_{2},a_{3}$ and $b$. We omit the description of conditions which can be easily obtained by the symmetry of $a_{2}$ and $a_{3}$, so all the possibilities of these parameters have been considered. In this sense, the conditions in Theorem \ref{thm3} are sufficient and necessary.
\end{remark}
\begin{remark}
Based on the functions in set $\mathcal{H}$, it is possible to choose $F$ with more variables or higher algebraic degrees, and we can follow the same ideas to obtain characterizations on other classes of bent functions. For example, if let $F=X_{1}X_{2}X_{3}$ and $h=f_{1}f_{2}f_{3}$, then
$$\left\{\begin{array}{l}
   W_{F}(000)=6,  \\
   W_{F}(100)=W_{F}(010)=W_{F}(001)=W_{F}(111)=2,  \\
   W_{F}(110)=W_{F}(101)=W_{F}(011)=-2.
\end{array}\right.$$
By Eq.\eqref{xxeq}, we obtain that $h$ is bent if and only if
\begin{eqnarray*}
&&4(-1)^{\Tr_{1}^{m}\left(\frac{a_{1}}{b}\right)\Tr_{1}^{m}\left(\frac{a_{2}}{b}\right)\Tr_{1}^{m}\left(\frac{a_{3}}{b}\right)}\\
&=&3(q+1)+\xi(a_{1},b)+\xi(a_{2},b)+\xi(a_{3},b)+\xi(a_{1}+a_{2}+a_{3},b)\\
&&-\xi(a_{1}+a_{2},b)-\xi(a_{1}+a_{3},b)-\xi(a_{2}+a_{3},b).
\end{eqnarray*}
Hence, more thorough investigations are needed to determine the parameters' characterisation.
\end{remark}

\section{On EA Equivalence}\label{sec4}
Two Boolean functions $\mu(x)$ and $\nu(x)$ on $\mathbb{F}_{2^m}$ are called Extended-Affine equivalent  (in brief, EA-equivalent), if there exists an affine permutation $\mathcal{A}$ and an affine Boolean function $\mathcal{L}$, such that $\mu=\nu\circ \mathcal{A}+\mathcal{L}$. If a class of bent functions is proposed, we want to know whether it is inequivalent with existing classes of functions. However, there is in general, no effective ways to decide whether two bent functions are
EA-equivalent. The EA equivalences between many known classes of concrete constructions of bent functions are unclear.

 Since the algebraic degrees of all Dillon-like bent functions are $\frac{m}{2}$, we know they are not EA-equivalent to these Gold, Leander, Canteaut-Charpin-Kyureghyan monomial bent functions.
 The function set $\mathcal{H}$ constitute the basis of bent functions in our constructions, the class of $h_{1}$ comes directly from $\mathcal{H}$, while the classes of $h_{2}$ and $h_{3}$ are obtained from $\mathcal{H}$ in nonlinear ways. By the fact that Kasami bent functions on $\mathbb{F}_{2^{2n}}$ with algebraic degrees $n$ exist only for even $n$, we know that the class of $h_{1}$ is not EA-equivalent with the class of Kasami functions. The experiments on $\mathbb{F}_{2^8}$ shows that the class of $h_{1}$ is not EA-equivalent to Dillon-like monomial functions. Therefore, the class of functions $h_{1}$ are not EA-equivalent to all these known monomial bent functions. For the two types of Boolean bent functions
$$\sum_{i\in D}\Tr_{1}^{m}(ax^{(ri+s)(q-1)})+\Tr_{1}^{o(d)}(\epsilon x^{\frac{q^2-1}{d}})$$
and
$$\sum_{i=0}^{d-1}\Tr_{1}^{m}(a_{i}x^{(l+i\frac{q+1}{d})(q-1)})+\Tr_{1}^{o(d)}(\epsilon x^{\frac{q^2-1}{d}})$$
 in \cite{nianli,yulong}, the class of functions $h_{1}$ is not EA-equivalent with them by experiments on $\mathbb{F}_{2^6}$. In \cite[Theorem 2]{nianli}, the authors discuss the bentness of a general trace quadrinomials as the following
 \begin{equation}\label{eq5.1}
 \Tr_{1}^{m}(ax^{l(q-1)}+bx^{(\frac{q+1}{e}-1)(q-1)}+cx^{(\frac{q+1}{e}+1)(q-1)})+\Tr_{1}^{t}(\epsilon x^{\frac{q^2-1}{e}}).
 \end{equation}
  We note that all the Dillon exponents are divided in to two, two and four cyclotomic cosets on $\mathbb{F}_{2^6}$, $\mathbb{F}_{2^8}$ and $\mathbb{F}_{2^{10}}$, respectively, then all Dillon-like bent functions are covered by Eq.\eqref{eq5.1} on these fields. For larger field $\mathbb{F}_{2^m}$ with $m\ge 12$, checking the EA inequivalence becomes infeasible on personal computer. Therefore, the equivalence between the class of functions $h_{1}$ and functions in Eq.\eqref{eq5.1} remains unclear. However, if $h_{1}$ is bent, its ordinary polynomial form can be written explicitly (see details in Appendix A1) as
\begin{equation}\label{eq4-2}
\left\{
\begin{array}{ll}
\Tr_{1}^{m}\left(ab^{-1}\sum_{j=1,j\,odd}^{q}b^{-j}x^{j(q-1)}\right)+\Tr_{1}^{m}(ab^{q^2-2}),&{\rm if}\, b\in U,\\
\Tr_{1}^{m}\left(\frac{a\overline{b}}{1+b\overline{b}}\sum_{j=1}^{q}b^{-j}x^{j(q-1)}\right)+\Tr_{1}^{m}(ab^{q^2-2}),&{\rm if}\, b\notin U.
\end{array}\right.
\end{equation}
We see that the ordinary polynomial form of $h_{1}$ is full with all Dillon exponents and is different with these known Dillon-like bent functions having fixed number of trace terms.

\section{Conclusion}\label{sec5}
It is an important direction that new functions can be obtained in a cryptographic sense based on suitable families of block functions. This paper introduces a special family of Dillon-like Boolean functions $\mathcal{H}$. These functions are trace rational and the Walsh transform can be determined. By using functions in $\mathcal{H}$ as building blocks, we obtain three classes of bent functions, of which the characterizations also are connected with the well-known Kloosterman sum. We deduce that bent functions not EA-equivalent to all known classes of monomial bent functions are included in our function classes. The explicit ordinary polynomial form in Eq.\eqref{eq4-2} shows that the class of $h_{1}$ is different with these classes of Dillon-like bent functions with fixed number of trace terms. Following the ideas in this paper, new families of block functions are worth considering, and new bent functions may be generated.

\section*{Appendix A1:  The Explicit Ordinary Polynomial Form of $h_{1}$}
To deduce the explicit polynomial form of $h_{1}$, a basic fact is that for integers $i$ and $d$ with $i\ge0$ and $d>0$, if $i=ed+r$, then $$x^i\equiv x^{e+r} \,{\rm mod}\, (x^d-x).$$
 Write every integer $0\le i\le q^2-2$ as $i=k(q+1)+j$, with $j\le q$ and $0\le k\le q-2$. If $j=0$, then
$$x^{i(q-1)}=x^{k(q+1)(q-1)}\equiv \left\{\begin{array}{ll}
     1 \,{\rm mod}\, (x^{q^2}-x), & {\rm if} \,k=0, \\
    x^{q^2-1} \,{\rm mod}\, (x^{q^2}-x),& {\rm if}\, k\ge 1.
  \end{array}\right.
$$
If $j\ne 0$, then
$$x^{i(q-1)}=x^{k(q^2-1)+j(q-1)}=x^{kq^2+j(q-1)-k}\equiv x^{j(q-1)} \,{\rm mod}\,(x^{q^2}-x).$$
Since ${q^2-2 \choose i}\equiv 1 \,{\rm mod}\,2$ if and only if $i$ is even, we get
\begin{eqnarray*}
\frac{1}{x^{q-1}+b}&=&\sum_{i=0}^{q^2-2}{q^2-2 \choose i}b^{q^2-2-i}x^{i(q-1)}
= \sum_{i=0,i\, {even}}^{q^2-2}b^{q^2-2-i}x^{i(q-1)} \\
&=&\sum_{j=0}^{q}\sum_{k=0, k(q+1)+j\, even}^{q-2}b^{q^2-2-k(q+1)-j}x^{k(q^2-1)+j(q-1)}\\
&\equiv&b^{q^2-2}+\sum_{k=1,k \,even}^{q-2}b^{q^2-2-k(q+1)}x^{q^2-1}\\
&&+\sum_{j=1}^{q}\left(\sum_{k=0,k(q+1)+j\, even}^{q-2}b^{q^2-2-k(q+1)-j}\right)x^{j(q-1)} \,{\rm mod}\, (x^{q^2}-x).
\end{eqnarray*}
Note that $\sum_{k=0}^{q-2}{b^{-k(q+1)}}=0$ for any $b\notin U$, then
\begin{eqnarray*}
\sum_{k=0,k\,even}^{q-2}b^{-k(q+1)}
&=&1+b^{-2(q+1)}+b^{-4(q+1)}+\ldots+b^{-(q-2)(q+1)}\\
&=&\frac{1+b^{-q(q+1)}}{1+b^{-2(q+1)}}\\
&=&\frac{1+b^{-(q+1)}}{1+b^{-2(q+1)}}\\
&=&\frac{b\overline{b}}{1+b\overline{b}}
\end{eqnarray*}
and
$$\sum_{k=1,k\,even}^{q-2}b^{-k(q+1)}=\frac{1}{1+b\overline{b}}.$$
Hence, the coefficient of term $x^{q^2-1}$ equals
$$\sum_{k=1,k \,even}^{q-2}b^{q^2-2-k(q+1)}=\left\{\begin{array}{cc}
                                              \frac{1}{b}, & {\rm if}\, b\in U, \\
                                               \frac{1}{b(1+b\overline{b})},&  {\rm if}\,b\notin U.
                                            \end{array}\right.
$$
When $h_{1}$ is bent, since the algebraic degree of a bent function is at most $\frac{m}{2}$, then $\Tr_{1}^{m}(\frac{a}{b})=0$ if $b\in U$ and $\Tr_{1}^{m}\left(\frac{a}{b(1+b\overline{b})}\right)=\Tr_{1}^{n}\left(\frac{a(b+\overline{b})}{b\overline{b}(1+b\overline{b})}\right)=0$ if $b\notin U$, which coincide with partial requirements in Theorem \ref{thm1}. Further, since
$$b^{-1-j}\sum_{k=0,k(q+1)+j \, even}^{q-2}b^{-k(q+1)}=\left\{\begin{array}{ll}
b^{-j-1},& {\rm if}\,j\, {\rm odd},\\
0,& {\rm if}\,j\, {\rm even}
\end{array}\right.
$$
for $b\in U$, and
 $$b^{-1-j}\sum_{k=0,k(q+1)+j \, even}^{q-2}b^{-k(q+1)}=
b^{-j-1}\frac{b\overline{b}}{1+b\overline{b}}
$$
for $b\notin U$, then Eq.\eqref{eq4-2} follows.

\end{document}